\title{\Huge $\,$\\[-6.00ex]
Binary Linear Codes with Optimal Scaling:\\
Polar Codes with Large Kernels\\[0.90ex]}
\author{
Arman Fazeli\\
   \small University of California San Diego\vspace*{-0.72ex}\\
   \small 9500 Gilman Drive, La Jolla, CA\,92093\vspace*{-0.54ex}\\
   \ttfamily\bfseries\small afazelic@ucsd.edu\\[4.5ex]
\and
{Hamed Hassani}\\
   \small University of Pennsylvania\vspace*{-.72ex}\\
   \small 3101 Walnut St, Philadelphia, PA\,19104\vspace*{-0.54ex}\\
   \ttfamily\bfseries\small hassani@seas.upenn.edu\\[4.5ex]
\and
{Marco Mondelli}\\
   \small Institute of Science and Technology Austria (IST Austria)\vspace*{-.72ex}\\
   \small Am Campus 1, 3400 Klosterneuburg, Austria\vspace*{-0.54ex}\\
   \ttfamily\bfseries\small marco.mondelli@ist.ac.at\\[4.5ex]
\and
{Alexander Vardy}\\
   \small University of California San Diego\vspace*{-0.72ex}\\
   \small 9500 Gilman Drive, La Jolla, CA\,92093\vspace*{-0.54ex}\\
   \ttfamily\bfseries\small avardy@ucsd.edu\\[6.5ex]

\thanks{%
The research of 
Arman Fazeli and Alexander Vardy was supported in part 
by the United States National Science Foundation under 
Grants CCF-1405119 and CCF-1719139.
Marco Mondelli was supported in part
by an Early Postdoc Mobility Fellowship from the
Swiss National Science Foundation.\vspace*{-6.00ex}}
}
\newtheoremstyle{custom}
{} 
{} 
{} 
{} 
{\bfseries} 
{:} 
{.25em} 
{} 
\theoremstyle{plain}
\newtheorem{theorem}{Theorem}
\newtheorem{lemma}[theorem]{Lemma}
\newtheorem{proposition}{Proposition}
\newtheorem{definition}{Definition}
\newtheorem*{theorem*}{Theorem}
\newtheorem*{definition*}{Definition}
\newcounter{enumrom}
\renewcommand{\theenumrom}{(\roman{enumrom})}
\renewcommand{\@endtheorem}{\endtrivlist}
\renewcommand{\fnum@figure}{{\bf Figure\,\@arabic\c@figure}}
\newcommand{\cA}{{\cal A}} 
\newcommand{\cB}{{\cal B}}
\newcommand{\cF}{{\cal F}}
\DeclareMathAlphabet{\mathbfsl}{OT1}{ppl}{b}{it} 
\newcommand{\bU}{\mathbfsl{U}} 
\newcommand{\bV}{\mathbfsl{V}}
\newcommand{\bX}{\mathbfsl{X}}
\newcommand{\uuu}{\mathbfsl{u}} 
\newcommand{\vvv}{\mathbfsl{v}}
\newcommand{\xxx}{\mathbfsl{x}}
\newcommand{\yyy}{\mathbfsl{y}}
\newcommand{\ceil}[1]{\left\lceil #1 \right\rceil}
\newcommand{\floor}[1]{\left\lfloor #1 \right\rfloor}
\newcommand{\be}[1]{\begin{equation}\label{#1}}
\newcommand{\ee}{\end{equation}} 
\newcommand{\eq}[1]{(\ref{#1})}
\renewcommand{\le}{\leqslant} 
\renewcommand{\leq}{\leqslant}
\renewcommand{\ge}{\geqslant} 
\renewcommand{\geq}{\geqslant}
\newcommand{\script}[1]{{\mathscr #1}}
\renewcommand{\Bbb}{\mathbb}
\newcommand{\N}{{\Bbb N}}
\newcommand{\R}{{\Bbb R}}
\newcommand{\Tref}[1]{The\-o\-rem\,\ref{#1}}
\newcommand{\Pref}[1]{Pro\-po\-si\-tion\,\ref{#1}}
\newcommand{\Lref}[1]{Lem\-ma\,\ref{#1}}
\newcommand{\Cref}[1]{Co\-ro\-lla\-ry\,\ref{#1}}
\newcommand{\Ftwo}{{{\Bbb F}}_{2}}
\newcommand{\Strut}[2]{\rule[-#2]{0cm}{#1}}
\newcommand{\al}{\alpha}
\newcommand{\eps}{\varepsilon}
\renewcommand{\epsilon}{\varepsilon}
\newcommand{\Arikan}{Ar{\i}kan}
\newcommand{\Km}{K^{\raisebox{0.5pt}{${\scriptscriptstyle\otimes}\scriptstyle m$}}}
\newcommand{\bits}{\{0,1\}}
\newcommand{\Pe}{P_{\kern-1pt\rm e}}
\DeclareRobustCommand{\sbinom}{\genfrac[]\z@{}}
\newcommand{\G}[2]{\sbinom{{#1}}{{#2}}}
\DeclareMathOperator{\BEC}{BEC}
\DeclareMathOperator{\GL}{GL}
\renewcommand{\P}{{\mathbb P}}
\newcommand{\zero}{{\mathbf 0}}
\newcommand{\one}{{\mathbf 1}}
\newcommand{\sY}{\script{Y}}
\newcommand{\VAB}{\bV_{\hspace{-1pt}\mbox{\scriptsize$\cA \cup\kern0.3pt \cB$}}}
\gdef\@punct{.\ \ }  
\def\@sect#1#2#3#4#5#6[#7]#8{%
  \ifnum #2>\c@secnumdepth
     \def\@svsec{}
  \else
     \refstepcounter{#1}\edef\@svsec{%
     \ifnum #2>0{{\csname the#1\endcsname}}.\fi%
    \hskip .5em}
  \fi
  \@tempskipa #5\relax
  \ifdim \@tempskipa>\z@
     \begingroup #6\relax
       \@hangfrom{\hskip #3\relax\@svsec}{\interlinepenalty \@M #8\par}
     \endgroup
     \csname #1mark\endcsname{#7}
     \addcontentsline{toc}{#1}{\ifnum #2>\c@secnumdepth\else
          \protect\numberline{\csname the#1\endcsname}\fi#7}
  \else
     \def\@svsechd{#6\hskip #3\@svsec #8\@punct\csname #1mark\endcsname{#7}
     \addcontentsline{toc}{#1}{\ifnum #2>\c@secnumdepth \else
          \protect\numberline{\csname the#1\endcsname}\fi#7}}
  \fi
  \@xsect{#5}}
\def\@ssect#1#2#3#4#5{\@tempskipa #3\relax
  \ifdim \@tempskipa>\z@
    \begingroup #4\@hangfrom{\hskip #1}{\interlinepenalty \@M #5\par}\endgroup
  \else \def\@svsechd{#4\hskip #1\relax #5\@punct}\fi
  \@xsect{#3}}
\begin{document}
\maketitle

\thispagestyle{empty}

\begin{abstract}
	\vspace*{0.72ex}
\color{black}{	
	\looseness=-1\noindent
	We prove that, for the binary erasure channel (BEC),
	the polar-coding paradigm gives rise to codes that not only approach the Shannon limit but do so under the \emph{best possible scaling of their block length} as a~function of the gap to capacity. This result exhibits the first known family of binary codes that attain both optimal scaling and quasi-linear complexity of encoding and decoding.
	%
	Our proof is based on the construction and analysis of binary polar codes 	\emph{with large kernels}. When communicating reliably at rates within 	$\varepsilon > 0$ of capacity, the code length $n$ often scales as $O(1/\varepsilon^{\mu})$, where the constant $\mu$ is called the \emph{scaling exponent}. It is known that the optimal scaling exponent is $\mu=2$, and it is achieved by random linear codes. The scaling exponent of conventional polar codes (based on the $2\times 2$ kernel) on the BEC is
	$\mu=3.63$. This falls far short of the optimal scaling guaranteed by random codes. Our main contribution is a rigorous proof of the following result: for the BEC, 	there exist $\ell\times\ell$ binary kernels, such that polar codes constructed from these kernels achieve scaling exponent $\mu(\ell)$ that tends to the optimal value of $2$ as $\ell$ grows. We furthermore characterize precisely how large $\ell$ needs to be  as a  function of the gap between $\mu(\ell)$ and $2$. The resulting binary codes maintain the recursive  structure of conventional polar codes, and thereby achieve construction complexity $O(n)$ and encoding/decoding complexity $O(n\log n)$.
	\vspace*{6.00ex}
}
\end{abstract}


\newpage
\setcounter{page}{1}
$\,$\vspace*{-6.0ex}
\section{Introduction} 
\label{sec:Introduction}
\vspace{-1.00ex}

\noindent\looseness=-1
Shannon's 
coding theorem implies that for every
binary-input memoryless symmetric (BMS) channel $W$, there
is a capacity $I(W)$ 
such that the following holds:
for all $\eps > 0$ and $\Pe > 0$, there exists~a~bina\-ry code
of rate at least $I(W) - \eps$ which
enables communication over $W$ with probability of error
at most $\Pe$. Ever since the publication of Shannon's
famous paper~\cite{Shannon}, the holy grail of coding 
theory was to find explicit codes that achieve Shannon
capacity with polynomial-time complexity of construction
and decoding. 
Today, several such families of codes are known, and the
principal remaining challenge is to characterize 
\emph{how fast we can approach capacity} as a~function
of the code block length $n$. Specifically, we now have
explicit binary codes
(which can be constructed and decoded in polynomial time)
of length $n$ and rate $R$, such that the gap to capacity
$\epsilon = I(W)-R$ required to achieve any fixed
error probability $\Pe > 0$ vanishes as a function of~$n$. 
The fundamental theoretical problem is to characterize
how fast this happens. Equivalently, we can fix $\epsilon = I(W)-R$
and ask how large does the block length $n$ need to be 
as a function of $\epsilon$. That is, we are interested 
in the \emph{scaling between the block length and the gap to capacity},
under the constraint of polynomial-time construction
and decoding.

\textcolor{black}{Based on~\cite{PPV10},} it is known that the optimal scaling is of the form $n = O(1/\varepsilon^{\mu})$, where the constant $\mu$ is \linebreak referred to as the \emph{scaling~exponent}. 
It is furthermore known that the best possible
scaling exponent is $\mu=2$, and it is achieved by
random linear codes --- although, of course, random
codes do not admit efficient decoding.
In this paper, we present the first family of binary codes 
that attain both optimal scaling and quasi-linear complexity
on~the binary erasure channel (BEC). Specifically,
for any fixed $\delta > 0$, we exhibit codes that
ensure reliable communication on the BEC at rates 
within $\varepsilon > 0$ of the 
Shannon capacity, with block length $n=O(1/\varepsilon^{2+\delta})$,
construction complexity $O(n)$, and encoding/decoding complexity
$O(n\log n)$. 

To establish this result, we use \emph{polar coding}, invented
by \Arikan~\cite{Ari09} in 2009.
However, while \Arikan's polar codes are based upon a specific
$2 \times 2$ kernel, we use $\ell \times \ell$ binary polarization
kernels, where $\ell$ is a sufficiently large constant.
The main technical challenge is to prove that this construction
works. To this end, we choose
the polarization kernel uniformly at random
from the set of all $\ell\times\ell$ nonsingular binary matrices,
and show that with probability at least $1 - O(1/\ell)$,
the resulting scaling exponent is at most $2 + \delta$. 
Since $\ell$ is a constant that depends only on $\delta$, 
the choice of a polarization
kernel can be, in principle, de-randomized using brute-force search
whose complexity is independent of the block length.

\textcolor{black}{By way of a disclaimer, the theorems in this paper require the size of the kernel to be \emph{extremely large} to the point that they are not practical at all.  Moreover, the decoding complexity of polar codes constructed with large kernels, if done naively and over arbitrary channels, scales \emph{exponentially} with the kernel size, which is another challenge in using these codes. On the positive side, we prove that, for sufficiently large values of $\ell$, \emph{almost all} kernels yield fast scaling exponents. This establishes a new family of binary error-correcting codes that are \emph{near optimal} in every aspect at the asymptotic regime; this is of purely theoretical interest. The problem of finding good polarization kernels with reasonable size and decoding complexity remains unsolved. We will review some recent advancements in the field on this problem in the final section.} 

\looseness=-1
The rest of this paper is organized as follows. 
In this section, we provide the necessary background and give an informal
statement of our main result (\Tref{thm1.2}).
In Section\,\ref{sec:outline}, we
present a brief outline of the proof. In\linebreak
Section\,\ref{sec:main_theorem}, 
we formally state our main theorem (\Tref{thm:main_theorem1}), and
gradually reduce its proof to a certain statement about 
$\ell \times \ell$ binary polarization kernels
(\Tref{thm:main_theorem2}). We defer the proof of \Tref{thm:main_theorem2}
itself, which is technically quite elaborate, 
to Section\,\ref{appendix:proof}. 
We conclude with a brief discussion in Section\,\ref{sec:disc}.

\subsection{Background and context}\vspace{-0.54ex}
\label{subsec:over}

A sequence of papers, starting with \cite{Do61,St62} in 
1960s and culminating with \cite{Ha09,PPV10},
shows that for any discrete memoryless
channel $W$ and \emph{any} code of length $n$ and rate $R$ that
achieves error-probability $\Pe$ on $W$, we have
\be{eq:randomsc}
I(W) - R 
\ \ge \
\frac{\text{const}(\Pe,W)}{\sqrt{n}} \ - \ O\left(\frac{\log n}{n}\right),
\ee
where the constant (which is given explicitly in~\cite{PPV10}) 
depends on $W$ and $\Pe$, but not on $n$. This immediately 
implies that if $n = O\left(1/\epsilon^\mu\right)$,
where $\eps = I(W) - R$ is the gap to capacity, then
$\mu \ge 2$. We further~note~that expressions similar
to \eq{eq:randomsc} were derived from the perspective 
of threshold phenomena in \cite{Tillich-cpc00} and 
from the~perspective of statistical physics in \cite{Mon01b}. 
%
The fact that $\mu \ge 2$ also follows from 
a simple heuristic~argument. For simplicity, consider the
special case of transmission over the BEC 
with erasure probability $p$. As $n\to\infty$,
the number of erasures will tend to the normal distribution
with mean $np$ and standard deviation \smash{$\sqrt{n p(1-p)}$}. 
Thus, channel randomness yields a variation in the fraction of erasures 
of order $1/\sqrt{n}$. This indicates that, in order to achieve a fixed 
error probability, the gap to capacity $\epsilon$ has to scale at least as
$1/\sqrt{n}$.

It is well known~\cite{Ha09,PPV10} that the lower bound $\mu = 2$ 
is achieved by random linear codes. For the special case of
transmission over the BEC, the proof of this fact reduces to
computing the rank of a certain random matrix. Indeed, the 
generator matrix of a random linear code of length $n$ and rate $R$
is a matrix with $Rn$ rows and~$n$ columns whose entries 
are i.i.d.\ uniform in $\{0, 1\}$. The effect of transmission 
over the BEC with erasure probability $p$ is equivalent to
removing each column of this generator matrix independently with
probability~$p$. The probability of error (under maximum-likelihood
decoding) is thus equal to the probability that such residual matrix 
is not full-rank. This probability is easy to compute, and
the desired scaling result immediately follows.

\looseness=-1
Unfortunately, random linear codes cannot be decoded efficiently.
On general BMS channels, this task~is~NP-hard~\cite{BMvT}.
On the BEC, decoding a general binary linear code takes time $O(n^\omega)$,
where $\omega$ is the exponent of matrix multiplication.
This leads to the following natural question: what is the lowest
possible scaling expo\-nent for binary codes that can be constructed, encoded,
and decoded efficiently? For the BEC, we take \emph{efficiently}
to mean linear or quasi-linear complexity. 
Here is a brief survey of the current state of knowledge on this question.

Forney's concatenated codes~\cite{For65} are a classical example 
of a capacity-achieving family of codes. However, their
construction and decoding complexity are exponential in the
inverse gap to capacity $1/\epsilon$ (see~\cite{Guru-Xia, Guru-Xia2}
for more details), so they are \textcolor{black}{not competitive from an asymptotic perspective}.
In recent years, three new families of capacity-achieving
codes have been discovered; let us review what is known regarding
their scaling exponents.\vspace{-1.00ex}

\begin{description}
\item[Polar codes:] 
\!\!Achieve the capacity of any BMS channel under a 
successive-cancellation decoding algorithm~\cite{Ari09} 
that runs in time $O(n \log n)$. 
It was shown in \cite{Guru-Xia, Guru-Xia2} that the block length,
construction complexity, and decoding complexity are all bounded by
a polynomial in 
$1/\epsilon$, which implies that the scaling exponent $\mu$ is finite. 
Later, a sequence of papers \cite{KMTU, HAU, Goldin-Burshtein, MHU2}
provided rigorous upper and lower bounds on $\mu$.
The specific value of $\mu$ depends on the channel $W$.
It is known that $\mu=3.63$ on the BEC. The best-known
bounds valid for any BMS channel $W$ are given by $3.579\le \mu \le 4.714$.
\vspace{-0.50ex}
   
\item[Spatially-coupled LDPC codes:]
\!\!Achieve the capacity of any BMS channel under a 
belief-propagation decoding algorithm \cite{KRU13}
that runs in linear time. A simple heuristic
argument yields that the scaling exponent of these 
codes is roughly $3$ (see \cite[Section VI-D]{MHU14asymm-ieeeit}). However,
a rigorous proof of this statement remains elusive and
appears to be technically challenging.
\vspace{-0.50ex}
   
\item[Reed-Muller codes:]
\!\! Achieve capacity of the BEC under maximum-likelihood
decoding \cite{RMpaper-STOC, RMpaper-ITTran} that runs in
time $O(n^\omega)$. 
While it has been observed empirically that the performance 
of Reed-Muller codes on the BEC is close to that of random 
codes~\cite{MoHaUr14}, no bounds on the
scaling exponent of these codes are known.
\end{description}

\looseness=-1
Let us point out that some papers also define a ``scaling exponent''
for codes that do not achieve capacity, such as ensembles of LDPC codes,
by substituting the specific threshold of the ensemble for 
channel capacity.
In this context, it is known~\cite{AMRU09} that 
for a large class of ensembles of LDPC codes and channel models, 
the scaling exponent is $\mu=2$. However, the threshold
of such LDPC ensembles does not converge to capacity.

\vspace{4.00ex}
\subsection{Our main result: 
    Binary linear codes with optimal scaling and quasi-linear complexity}
\vspace{-1.00ex}
\label{subsec:contribution}

Our main result provides the first family of binary codes 
for transmission over the BEC that
achieves optimal scaling between the gap to capacity $\epsilon$ and 
the block length $n$, and that can be constructed, encoded, 
and decoded in quasi-linear time. In other words, the block length, 
construction, encoding, and decoding
complexity are all bounded by a polynomial in $1/\epsilon$ and, moreover,
the degree of this polynomial approaches the information-theoretic lower
bound $\mu \ge 2$. Somewhat informally (cf.~\Tref{thm:main_theorem1}),
this result can be stated as follows.

\begin{theorem}
\label{thm1.2}
Consider transmission over \textcolor{black}{i.i.d. copies of} a binary erasure channel\, $W$ with capacity
$I(W)$. Fix \textcolor{black}{the block error probability $\Pe \in\! (0,1)$}~and an arbitrary $\delta \in (0, 1]$. Then, \textcolor{black}{there exists a fixed constant $\ell_0(\delta)$} such that for all
$R < I(W)$, there exists \textcolor{black}{a binary linear code} of rate \textcolor{black}{at least} $R$ that guarantees error probability at most $\Pe$ on the channel $W$, and whose block length $n$ \textcolor{black}{is at most}
\begin{equation}
n~ \ \textcolor{black}{\le \ \frac{\beta\ell_0(\delta)}{\bigl(I(W)-R\bigr)^{2+\delta}}}\hspace{1ex},
\end{equation}
where\/ $\beta = \bigl(1+2\hspace{0.1em}\Pe^{\textcolor{black}{-1}}\bigr)^{3}$
is a universal constant. 
\textcolor{black}{Furthermore, as $R$ approaches $I(W)$ and $n$ grows, this code has construction complexity\/ $O(n)$ and encoding/decoding complexity\, $O(n\log n)$. }
\end{theorem}

A \textcolor{black}{few} remarks regarding  \Tref{thm1.2}
are in order. First, in the definition of the
constant $\beta$, the term $\Pe$ is raised to the power of
$-1$. We point out that we could have similarly chosen any other negative constant as the exponent of~$\Pe$. \textcolor{black}{However, picking a smaller exponent for $\Pe$ requires us to select an upper bound on $\delta$ which is stricter than $\delta\le 1$. This, in turn, increases $\ell_0(\delta)$.} Second, the error
probability in \Tref{thm1.2} is upper bounded
by a fixed constant~$\Pe$. However, a somewhat stronger claim is possible.
It can be shown that \Tref{thm1.2} still holds if 
the error probability is required to 
decay \emph{polynomially fast} with the block length $n$.
\textcolor{black}{Lastly, it should be emphasized that $\ell_0(\delta)$ is a  constant that only depends on $\delta$. However, its dependence is of an exponential nature, \emph{i.e.} $\ell_0(\delta) = O(\exp(\sfrac{1}{\delta^{1.01}}))$. This limitation prevents the proposed scaling exponent to be exactly equal to $2$ while maintaining the quasi-linear complexity property.} 
\vspace{1.25ex}

To prove \Tref{thm1.2}, we will show that
there exist $\ell\times\ell$ binary kernels, such that polar codes
constructed~from these kernels achieve capacity with 
a scaling exponent $\mu(\ell)$ that tends to the optimal value of $2$ 
as $\ell$ grows. 
The claim regarding the construction 
and encoding/decoding complexities immediately~follows from known
results on polar codes~\cite{Ari09,Sasoglu-FnT,TVbounds}. 
Indeed, polar codes constructed from $\ell\times\ell$ binary kernels
maintain the recursive structure~of conventional polar codes, 
and thereby inherit construction complexity $O(n)$ and encoding/decoding
complexity $O(n\log n)$. We will discuss the decoding complexity 
in more detail in Section\,\ref{sec:disc}.

\vspace{2.00ex}
\subsection{A primer on polar codes}
\vspace{-1.00ex}
\label{subsec:primer}

Like many fundamental discoveries, polar 
codes~are rooted in a simple and beautiful basic idea.
Polarization is induced via a simple linear transformation
consisting of many Kronecker products of a binary
matrix~$K$, called the \emph{polarization kernel}, with itself.
Conventional polar codes, introduced by \Arikan~in~\cite{Ari09},
correspond to 
\begin{equation}\label{eq:orker}
K
\, = \,
\left[ 
\begin{array}{@{\hspace{0.50ex}}c@{\hspace{1.005ex}}c@{\hspace{0.50ex}}}
1 & 0\\
1 & 1\\ 
\end{array}
\right].
\end{equation}
However, it was shown in ~\cite{KSU} that
we can construct polar codes from any kernel $K$ 
that is an $\ell\times\ell$ nonsingular binary matrix, which
cannot be transformed into an upper triangular matrix under any column
permutations.

Let $W {:}\, \{0,1\} \to \sY$ be a 
BMS channel, characterized~in terms of 
its transition pro\-babilities $W(y|x)$,
for all $y \,{\in}\, \sY$ and $x \,{\in}\, \bits$.
Further, let $\bU\hspace{-1pt} = (U_1,U_2,\dots,U_n)$ be a block of 
\smash{$n = \ell^m$} bits chosen uniformly at random from 
\smash{$\{0,1\}^n$}. 
We encode $\bU$ as\, $\bX = \bU \Km$ and transmit $\bX$ through
$n$ independent copies of $W$, as shown in Figure~\ref{polar-def}.

\begin{figure}[h!] 
	\centering 
	\includegraphics[width=3.06in,trim=0 2ex 0 2ex]{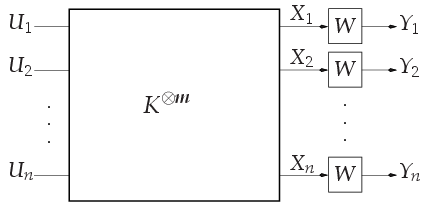}
	\caption{Block diagram of a polar coded communication scheme.}
	\label{polar-def}
\end{figure} 

To understand what polarization means~in this context, 
we consider a number of channels 
associated with this transformation (see also Chapter 5 of \cite{Sasoglu-FnT} and Chapter 2.4 of \cite{Hassani}). Let 
$W^n {:}\, \{0,1\}^n \to \sY^n$ be the channel
that corresponds to $n$ independent uses~of~$W$, and
let $W^* {:}\, \{0,1\}^n\! \to \sY^n$ be the channel
with transition probabilities given by
$
W^*\kern-1pt(\yyy|\uuu) 
=
W^n\kern-1pt
\bigl(\yyy\hspace{1pt}{\bigm|}\hspace{1pt}\uuu\hspace{1pt} \Km \bigr)
$. 
Finally, for all $i \in [n]\textcolor{black}{\triangleq \{1,2,\dots,n\}}$, let 
\smash{$W_i : \{0,1\}\! \to \sY^n{\times}\{0,1\}^{i-1}$}
be the channel that is ``seen'' by the bit $U_i$,
defined as\vspace*{0.9ex}
\be{Wi-def}
W_i\bigl( \yyy,\vvv | \hspace{1pt}u_i)
\:\ \textcolor{black}{\triangleq} \:\
\frac{1}{2^{n-1}}\hspace{-5pt}
\sum_{{\uuu}' \in \{0,1\}^{n-i}\hspace{-12pt}} \hspace{-6pt}
{W}^*\hspace{-1pt}\Bigl(\yyy\hspace{1pt}{\bigm|}\hspace{1pt}
(\vvv,u_i,{\uuu}') \Bigr)
\ = \
\frac{1}{2^{n-1}}\hspace{-5pt}
\sum_{{\uuu}' \in \{0,1\}^{n-i}\hspace{-12pt}} \hspace{-6pt}
{W}^n\hspace{-1pt}\Bigl(\yyy\hspace{1pt}{\bigm|}\hspace{1pt}
(\vvv,u_i,{\uuu}') \Km \Bigr),
\ee
where $(\cdot,\cdot)$ stands for 
concatenation. 
We say that $W_i$ is the \emph{$i$-th bit-channel}.
It is easy to see that \smash{$W_i\bigl( \yyy,\vvv | \hspace{1pt}u_i)$}~is 
indeed the~probability of the event that
$(Y_1,Y_2,\dots,Y_n) \hspace{-1pt}= \yyy\/$ and 
$(U_1,U_2,\dots,U_{i-1}) \hspace{-1pt}= \vvv$
given that $U_i = u_i$.

The key observation of \cite{Ari09} is that, as $n$ grows, the 
$n$ {bit-channels} $W_i$ defined in \eq {Wi-def}
start \emph{polarizing}: they approach either a \emph{noiseless channel} or a 
\emph{useless channel}. Formally, given a 
BMS~channel~$W$, its \emph{capacity} $I(W)$\textcolor{black}{,} and
\emph{Bhattacharyya parameter} $Z(W)$~are~defined~by
\begin{equation}\label{Z-def} 
\begin{split}
I(W) &
\,\ \textcolor{black}{\triangleq}\,\
\frac{1}{2}
\sum_{y\in\sY}\hspace{-2pt}\sum_{x\in\bits\hspace{-9pt}}
\hspace{-3pt}W(y|x)\log_2\frac{W(y|x)}{\frac{1}{2}W(y|0) + \frac{1}{2}W(y|1)}\ ,\\
Z(W) &
\,\ \textcolor{black}{\triangleq}\kern1pt
\sum_{y\in\sY} \!\sqrt{W(y|0)W(y|1)}.
\end{split}
\end{equation}
Given $\delta \,{\in}\, (0,1)$, let us say that
a bit-channel $W_i$ is \emph{$\delta$-bad\/} if $Z(W_i) \ge 1-\delta$~and 
\emph{$\delta$-good\/} if $Z(W_i) \le \delta$. Then the polarization
theorem of \Arikan~\cite[Theorem\,1]{Ari09} can be informally
stated~as~follows.
\begin{theorem*}[Polarization theorem]
\label{thm1}
For every $\delta \,{\in}\, (0,1)$, almost all bit-channels become 
either $\delta$-good or $\delta$-bad as $n \to \infty$. In fact,
as $n \to \infty$, the fraction of $\delta$-good bit-channels
approaches the capacity $I(W)$~of~the~underlying channel $W$,
while the fraction of $\delta$-bad bit-channels approaches 
$1-I(W)$. 
\end{theorem*}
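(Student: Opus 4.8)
The plan is to follow \Arikan's original martingale argument. Organize the $n=\ell^m$ bit-channels defined in \eq{Wi-def} as the leaves of a complete $\ell$-ary tree of depth $m$: the root is $W$, and the $\ell$ children of any internal node $V$ are the $\ell$ synthesized channels obtained from $V$ by one application of the kernel (for the $2\times2$ kernel, the ``minus'' channel $V^-$ and the ``plus'' channel $V^+$). Let $\cW_0=W,\cW_1,\dots,\cW_m$ be the channels visited along a root-to-leaf path chosen uniformly at random, and set $\cF_k=\sigma(\cW_0,\dots,\cW_k)$. Since a uniformly random leaf is distributed exactly as $\cW_m$, the fraction of $\delta$-good bit-channels equals $\P\bigl(Z(\cW_m)\le\delta\bigr)$ and the fraction of $\delta$-bad ones equals $\P\bigl(Z(\cW_m)\ge 1-\delta\bigr)$; it thus suffices to analyze the real-valued processes $I_k:=I(\cW_k)$ and $Z_k:=Z(\cW_k)$.

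I would next record the structural facts that make these processes tractable. Because the kernel is invertible, total mutual information is conserved across one level of the transform, so $\E[I_k\mid\cF_{k-1}]=I_{k-1}$; hence $\{I_k\}$ is a $[0,1]$-valued martingale, and in particular $\E[I_k]=I(W)$ for every $k$. For the $2\times2$ kernel one also has the elementary inequalities $Z(V^+)\le Z(V)^2$ and $Z(V^-)\le 2Z(V)-Z(V)^2$, whose sum gives $Z(V^-)+Z(V^+)\le 2Z(V)$, so that $\E[Z_k\mid\cF_{k-1}]\le Z_{k-1}$ and $\{Z_k\}$ is a bounded supermartingale. By the martingale (resp.\ supermartingale) convergence theorem, $I_k\to I_\infty$ almost surely and in $L^1$, and $Z_k\to Z_\infty$ almost surely.

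The crux is to show $Z_\infty\in\{0,1\}$ almost surely. Along almost every path the step ``go to the plus child'' occurs infinitely often, and at each such step $Z_k\le Z_{k-1}^2$; passing to the limit along this subsequence (on which both $Z_{k-1}$ and $Z_k$ tend to $Z_\infty$) forces $Z_\infty\le Z_\infty^2$, i.e.\ $Z_\infty(1-Z_\infty)\le 0$, hence $Z_\infty\in\{0,1\}$. The standard two-sided estimates relating capacity and Bhattacharyya parameter of a BMS channel (e.g.\ $I(W)\ge\log_2\tfrac{2}{1+Z(W)}$ and $I(W)\le\sqrt{1-Z(W)^2}$) then give, path by path, $I_\infty=\one\{Z_\infty=0\}$, and combining this with $\E[I_\infty]=I(W)$ yields $\P(Z_\infty=0)=\P(I_\infty=1)=I(W)$ and $\P(Z_\infty=1)=1-I(W)$. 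For fixed $\delta\in(0,1)$ we then have $\one\{Z_k\le\delta\}\to\one\{Z_\infty=0\}$ and $\one\{Z_k\ge1-\delta\}\to\one\{Z_\infty=1\}$ almost surely, so bounded convergence gives $\P(Z_m\le\delta)\to I(W)$ and $\P(Z_m\ge1-\delta)\to1-I(W)$, while the fraction of bit-channels that are neither $\delta$-good nor $\delta$-bad tends to $0$; unwinding the leaf-indexing recovers the theorem.

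The step I expect to be the real obstacle is ``$Z_\infty\in\{0,1\}$'', i.e.\ establishing that the transform genuinely polarizes rather than merely that the capacity martingale converges. For the $2\times2$ kernel this is handled by the clean squaring inequality above; for a general $\ell\times\ell$ kernel no such inequality need hold for any single child, and one instead works with the capacity martingale and must prove a quantitative polarization lemma: there is a continuous $\kappa\colon[0,1]\to[0,\infty)$, strictly positive on $(0,1)$, with $\max_j I(V_{[j]})-\min_j I(V_{[j]})\ge\kappa\bigl(I(V)\bigr)$ for every BMS channel $V$, where $V_{[1]},\dots,V_{[\ell]}$ denote the synthesized channels. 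Since $\{I_k\}$ is an $L^1$-convergent martingale, $\E\,\lvert I_k-I_{k-1}\rvert\to0$, which via the conservation identity forces $\E[\kappa(I_{k-1})]\to0$ and hence $I_\infty\in\{0,1\}$. Pinning down this lemma --- in particular, why the hypothesis that $K$ cannot be brought to upper-triangular form by column permutations makes $\kappa$ strictly positive --- is the technically delicate part (cf.\ \cite{KSU}).
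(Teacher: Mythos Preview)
Your proposal is correct and follows essentially the same approach the paper indicates: the paper does not give a self-contained proof of this theorem but simply notes (around \eqref{eq:random_process_definition}) that it ``follows from the martingale convergence theorem, which in this case implies that $\lim_{m\to\infty} Z_m(1-Z_m)=0$,'' together with capacity preservation of the nonsingular transform, and cites \Arikan~\cite{Ari09} and \cite{KSU}. Your write-up is a faithful and correct expansion of exactly this argument, including the standard subsequence trick $Z_\infty\le Z_\infty^{2}$ for the $2\times2$ kernel and the correct identification of the ``non-upper-triangular'' hypothesis from \cite{KSU} as the ingredient needed for general $\ell\times\ell$ kernels.
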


With $\delta =  o(1/n)$,
this theorem naturally leads to the construction of capacity-achieving
\emph{polar codes}.
Specifically,~an $(n,k)$ polar code 
is constructed by selecting a set \smash{$\cA$} of $k$ {\textcolor{black}{$\delta$-}good} 
bit-channels to carry the information bits, while the 
input to all the other bit-channels is frozen to zeros.
In practice, 
the code parameters $k$ and~$\delta$ are usually selected according 
to the target rate of the code and/or the desired probability of error.

\looseness=-1
Henceforth, let us focus on the \emph{binary erasure channel} with erasure
probability $z$, which we denote as~$\BEC(z)$. It is well known that
for $W = \BEC(z)$, we have $Z(W) = z$ and $I(W) = 1-z$. It is furthermore known
(see, for example, \cite[Section 3.4]{Hassani}, \cite{F14}, or \textcolor{black}{\cite[Section 2.2]{Fazeli}}) that if $W = \BEC(z)$,
then for all $i \in [n]$, the $i$-th bit-channel $W_i$ is also
a~binary erasure channel $\BEC\bigl(p_i(z)\bigr)$, whose erasure
probability $p_i(z)$ is a polynomial of degree at most $n$ in $z$.

A proof of the polarization theorem for the BEC follows by studying 
the evolution of these $n$ erasure probabilities $p_i(z)$ as 
$n = \ell^m$ grows. For a fixed kernel $K$, this evolution is
completely determined by the erasure probabilities of the $\ell$
bit-channels obtained after a \emph{single step of polarization}.
These $\ell$ erasure probabilities are a central object of study in this paper.
\begin{definition*}[Polarization behavior]
\label{def1}
Let $W = \BEC(z)$ and let $K$ be a fixed $\ell\times\ell$ binary
polarization kernel. For each $i \in [\ell]$, we let $f_{K,i}(z)$ denote
the erasure probability of the bit-channel $W_i$ given by\/
{\rm \eq{Wi-def}} with $n = \ell$\linebreak
and 
$
W^*\kern-1pt(\yyy|\uuu) 
=
W^\ell\kern-1pt
\bigl(\yyy\hspace{1pt}{\bigm|}\hspace{1pt}\uuu\hspace{1pt} K \bigr)
$. 
We refer to the set of $\ell$ polynomials
$\bigl\{f_{K,1}(z),f_{K,2}(z),\ldots,f_{K,\ell}(z)\bigr\}$
as the 
polarization behavior of the kernel $K$.
\end{definition*}
\noindent
Indeed, we shall see later in this paper that 
$f_{K,i}(z)$ is a polynomial of degree at most $\ell$ in $z$,
for all $i$. 
For~example, in the special case of the $2\times2$ kernel \eqref{eq:orker},
the polarization behavior is given by
$f_{K,1}(z) = 2z-z^2$ and $f_{K,2}(z) = z^2$. 
With this notation, it is advantageous to view
the $n = \ell^m$ erasure probabilities $p_i(z)$ 
as the values taken by a random variable $Z_m$ 
induced by the uniform distribution on the $\ell^m$ bit-channels.
Given that \textcolor{black}{$K$} is non-singular, one can show that $\Km$ is also non-singular. Furthermore, by applying the chain rule of mutual information, since the matrix $\Km$ is nonsigular, it is easy to see that the polar transform in Figure~\ref{polar-def} preserves capacity. We can then study the evolution of this random variable $Z_m$ as $m$ grows.
More formally, the recursive construction of $\Km$ 
makes it possible to introduce the martingale
$\{Z_{m}\}_{m\in \mathbb N}$ defined as follows:
\begin{align}
\label{eq:random_process_definition}
Z_{m+1} \, = \, f_{K,B_m}(Z_m) &,~~~~\text{ for }B_m \sim \textsf{Uniform}[\ell],
\end{align}
with the initial condition $Z_0 = z$.
One can view \eq{eq:random_process_definition} as a stochastic process
on an infinite \textcolor{black}{$\ell$-ary} tree, where in each step we take one of the
$\ell$ available branches with uniform probability. The polarization
theorem then follows from the \textcolor{black}{almost sure convergence given by} the martingale convergence theorem,
which in this case implies that 
\textcolor{black}{
\begin{align}
\mathbb{P}\big(\lim_{m\rightarrow \infty}Z_m(1-Z_m) = 0\big) = 1,
\end{align}
where the probability measure is defined with respect to the random selection of the bit-channel indices.} This shows that the erasure probabilities $p_i(z)$ of 
the $\ell^m$ bit-channels polarize to either $0$ or $1$ as $m\to\infty$. 
Hence, the fraction of bit-channels that polarize to $0$ approaches $I(W)$. 
The speed with which this polarization phenomenon takes
place is the determining factor in the decay rate of 
the gap to capacity as a~function of the block length $n = \ell^m$. 
We elaborate on this in the next subsection.

\vspace{3.0ex}
\subsection{On the rate of polarization in various regimes}
\label{subsec:polar}

\looseness=-1
The performance of polar codes has been analyzed in several
regimes. In the \emph{error-exponent} regime, the rate $R<I(W)$ is
fixed, and we study how the error probability $\Pe$ scales as 
a~function of the block length $n$. This 
is represented by the
vertical/blue cut in Figure\,\ref{fig:scalingreg}. In \cite{ArT09}, 
it is shown that the error probability under successive-cancellation 
decoding behaves roughly as $2^{-\sqrt{n}}$. A more refined scaling 
in this regime is proved in~\cite{HMTU13}.

\begin{figure}[t] 
\centering 
\includegraphics[width=0.8\columnwidth]{./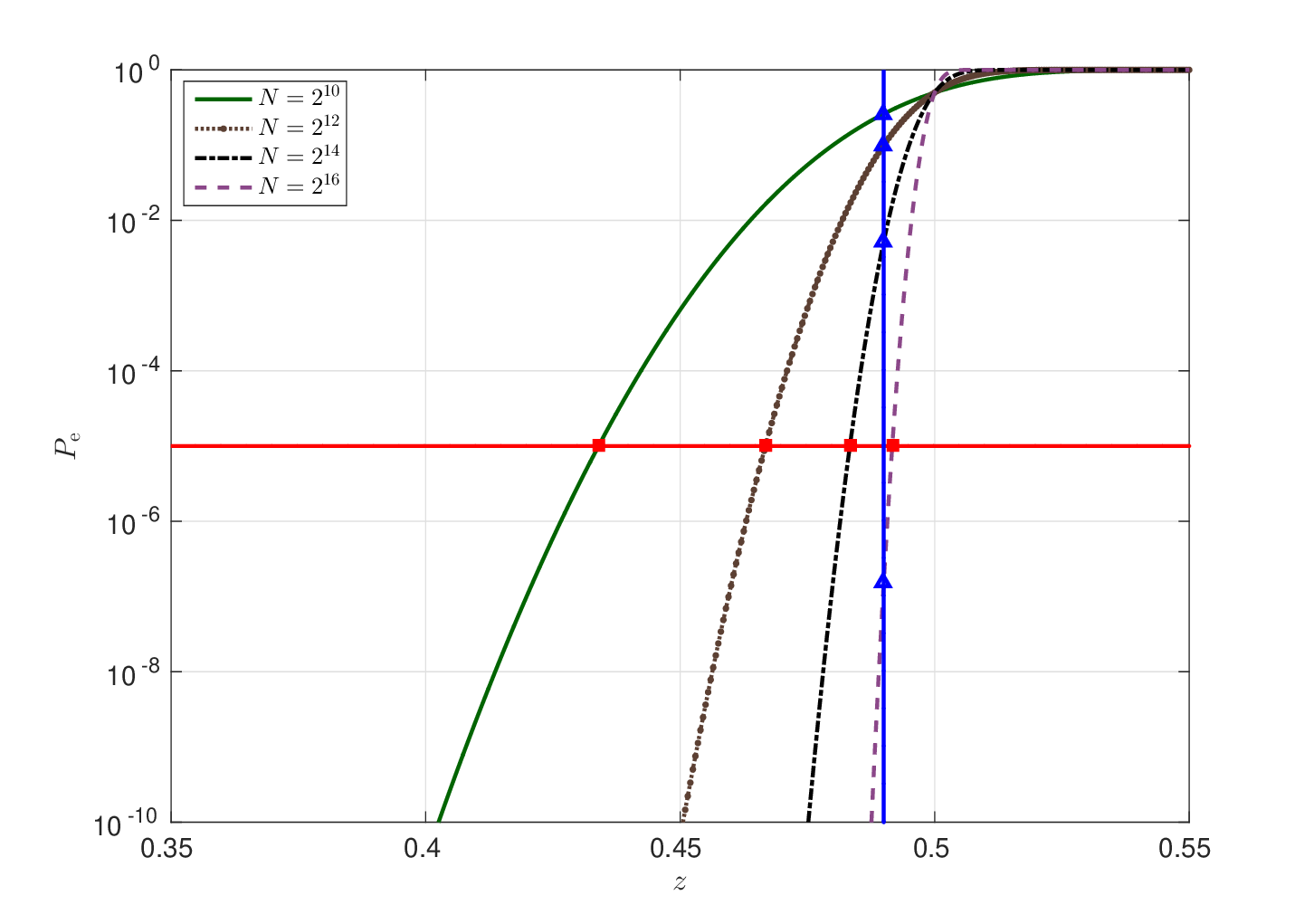}
\caption{Performance of a family of codes with rate $R=0.5$ \textcolor{black}{over binary erasure channels}. Different
curves correspond to different codes of varying block length $n$.
The $x$-axis is the \textcolor{black}{erasure probability of  underlying transmission channel, denoted by $z$}, and the $y$-axis is the error probability $\Pe$. The error-exponent regime captures the behavior of the blue/vertical cut at a fixed channel parameter $z$ (or, equivalently, at a fixed gap to capacity $I(W)-R$). The error-floor regime captures the behavior of a single curve, of fixed block length $n$. The scaling-exponent regime captures the behavior of the red/horizontal cut at a fixed error probability $\Pe$. The figure is courtesy of \cite{MHU2}.}
\label{fig:scalingreg}
\end{figure} 

\textcolor{black}{As common practice for comparison purposes, we consider using a communication channel over a family of channels that can be characterized by a single channel parameter such as the erasure probability in BEC.} In the \emph{error-floor} regime, the code is fixed (i.e., the rate
$R$ and the block length $n$ are fixed), and we study how the
error probability $\Pe$ scales as a function of the channel
parameter. This approach corresponds to taking into account one of the
four curves in Figure\,\ref{fig:scalingreg}. In~\cite{EP13}, it is
proved that the stopping distance of polar codes scales as $\sqrt{n}$,
which implies good error-floor performance under belief-propagation decoding. 
The authors~of~\cite{EP13} also provide simulation results that show 
no sign of an error floor for transmission over the BEC and over the
binary-input AWGN channel. This problem is completely settled in \cite{MHU2},
where it is shown that polar codes do not exhibit error floors under
transmission over any BMS channel.

\looseness=-1
The focus of this paper is on the \emph{scaling-exponent} regime,
where the error probability $\Pe$ is fixed, and we study how the
gap to capacity $I(W)-R$ scales as a function of the block length $n$. 
This approach is represented by the horizontal/red cut in
Figure\,\ref{fig:scalingreg}. As mentioned earlier, if $n$ 
is $O\bigl(1/(I(W)-R)^\mu\bigr)$, we say that the family
of codes has \emph{scaling exponent} $\mu$. For polar codes, the 
value of $\mu$ depends on the underlying channel~$W$. In
\cite{KMTU}, a heuristic method is presented for computing the
scaling exponent in the case of transmission over the BEC under
successive-cancellation decoding; this method yields $\mu \approx 3.627$. 
In \cite{Guru-Xia, Guru-Xia2}, it is shown that the block length,
construction, encoding and decoding complexity are all bounded by a
polynomial in the inverse of the gap to capacity, for transmission
over any BMS channel. This implies that there exists a finite scaling
exponent~$\mu$. Rigorous bounds on $\mu$ are provided in 
\cite{HAU,Goldin-Burshtein,MHU2}. In \cite{HAU}, it is proved that 
$3.579 \le \mu \le 6$, and it is conjectured that the lower bound can be
increased to $3.627$ (i.e., up to the value heuristically computed
for the BEC). In \cite{Goldin-Burshtein}, the upper bound is improved 
to $\mu \le 5.702$. The currently best-known upper bounds on the 
scaling exponent are established
in \cite{MHU2}: for any BMS channel, $\mu \le 4.714$; and for
the special case of the BEC, $\mu\le 3.639$, which approaches the
value obtained heuristically in \cite{KMTU}. As a side note, let us
point out that the heuristic method of \cite{KMTU} is based on 
a ``scaling assumption'' which requires the existence of a certain
limit. The results of \cite{HAU, Goldin-Burshtein, MHU2}, as well as
the results presented in this paper, do not rely on such an assumption.

In a nutshell, the scaling exponent of \textcolor{black}{classical polar codes constructed via Ar{\i}kan's \textcolor{black}{$2 \times 2$ kernel}} 
is~around~$4$. Its exact value depends on the underlying transmission channel 
and can be bounded as $3.579 \le \mu \le 4.714$. In contrast, random
binary linear codes achieve the optimal scaling exponent of $2$. 
\pagebreak[3.99]
This means that, in order to obtain the same gap to capacity, the block 
length of polar codes needs to be roughly the square of the block length 
of random codes. Hence, a natural question is how to improve the scaling
exponent of polar codes.

\looseness=-1
One possible approach is to improve the successive-cancellation
decoding algorithm. In particular,~the~succes\-sive~cancellation 
\emph{list decoder} proposed in \cite{TVlist} empirically provides 
a significant improvement in performance.
However, \cite{MHU} establishes a negative result
for list decoders: the introduction of any finite-size list cannot improve
the scaling exponent under MAP decoding for transmission over any
BMS channel. Furthermore,~for the special case of the BEC, it is also
proved in \cite{MHU} that the scaling exponent under successive-cancellation
decoding does not change even under a finite number of interventions
(that reverse incorrect decisions) from a genie.

\looseness=-1
Another approach is to consider polarization kernels of size 
larger than \Arikan's $2\times 2$ matrix \eqref{eq:orker}. 
Indeed, it is already known that such
kernels have the potential to improve the scaling behavior of polar
codes. For the error-exponent regime, Korada, \c{S}a\c{s}o\u{g}lu, 
and Urbanke proved in~\cite{KSU} that for $\ell$ sufficiently large,
there exist $\ell\times\ell$ binary kernels such that the
error probability of the resulting polar codes scales roughly as $2^{-n}$,
rather than~$2^{-\sqrt{n}}$. For the scaling-exponent regime, Fazeli
and Vardy~\cite{F14} observed that the value of $\mu$ on the BEC
can be reduced from $\mu = 3.627$ for the matrix in \eq{eq:orker}
to $\mu(K_8) = 3.577$
and
$\mu(K_{16}) = 3.356$,
where $K_8$ and $K_{16}$ are specific binary kernels constructed 
in~\cite{F14}.
Pfister and Urbanke~\cite{PU16} recently proved that, in the case of 
transmission over the $q$-ary erasure channel, the optimal 
scaling-exponent value of $\mu = 2$ can be approached as both
the size of the kernel $\ell$ and the size of the alphabet $q$
grow without bound. Furthermore, Hassani~\cite{Hassani}
gives evidence supporting the conjecture that, in order to
approach $\mu=2$ on the erasure channel, it suffices to consider large kernels 
over the \emph{binary alphabet}. 
%
Herein, we finally settle this conjecture.


\vspace{3.0ex}
\section{Outline of the Proof}
\label{sec:outline}

The proof of our main result consists of several major steps. 
The technical part of the proof is, on occasion, quite intricate.
To help the reader, we briefly discuss the main ideas behind each 
of the steps in this section.

\vspace{1.08ex}
\noindent
{\bf Step\,1: Characterization of the polarization process.} 
In order to understand the finite-length scaling of polar codes, 
we need to understand how fast the random process $Z_m$ 
defined in \eq{eq:random_process_definition}
polarizes. In other words, given a~small 
\textcolor{black}{$\zeta > 0$}, how fast does the quantity 
$\mathbb{P}\{Z_m \in [\zeta, 1-\zeta]\}$ 
vanish with $m$? To answer this question, we first relate 
the decay rate of $Z_m$ with another quantity that
can be directly computed from the kernel matrix $K$.

\looseness=-1
As the first step along these lines, we consider the behavior
of another random process $Y_m = g_\al(Z_m)$, where
$g_{\alpha}(z) = z^\al(1-z)^\al$, and $\alpha > 0$ is 
a parameter to be determined later. Note that 
$Z_m \!\in [\textcolor{black}{\zeta}, 1-\textcolor{black}{\zeta}]$ if and only~if~$Y_m$
is lower-bounded by $\textcolor{black}{\zeta}^\al(1-\textcolor{black}{\zeta})^\al$. 
Therefore, by Markov's inequality, we have
\begin{align} 
\label{markoveps}
{\mathbb P} \bigl\{Z_m \in [\textcolor{black}{\zeta},1-\textcolor{black}{\zeta}] \bigr\} 
\ \leq \
\frac{\mathbb{E}[g_{\alpha}(Z_m)]}{\textcolor{black}{\zeta}^\al(1-\textcolor{black}{\zeta})^\al} 
\end{align}
In order to derive an upper bound on $\mathbb{E} [g_{\alpha}(Z_m)]$, 
we write:
\begin{align}
\begin{split}
g_{\alpha}(Z_m) 
& =\ 
\Bigl(f_{K,B_m} (Z_{m-1}) \bigl(1-f_{K,B_m} (Z_{m-1})\bigr)\Bigr)^\alpha 
\\
& = \
Z_{m-1}^\alpha (1-Z_{m-1})^\alpha  
\left( 
\frac{f_{K,B_m}(Z_{\textcolor{black}{m-1}}) \bigl(1-f_{K,B_m} (Z_{\textcolor{black}{m-1}})\bigr)}{Z_{m-1}(1-Z_{m-1})} 
\right)^\alpha  
\\ 
& =\ 
g_{\alpha}(Z_{m-1})  
\left( 
\frac{f_{K,B_m}(Z_{\textcolor{black}{m-1}}) \bigl(1-f_{K,B_m} (Z_{\textcolor{black}{m-1}})\bigr)}{Z_{m-1}(1-Z_{m-1})} 
\right)^\alpha 
.
\end{split}
\end{align}
Proceeding along these lines, we eventually conclude that
\begin{equation} 
\label{zeta1}
\mathbb{E} [g_{\alpha}(Z_m)] \ \leq \ \bigl(\lambda_{\alpha, K}^*\bigr)^m,
  \end{equation}
  where
  \begin{equation} \label{zeta2}
   \lambda_{\alpha, K}^* 
\ \triangleq \   
\sup_{z \in (0,1)} \frac{1}{\ell} ~
\frac{\displaystyle\sum_{i=1}^{\ell} 
\Bigl(f_{K,i}(z)\bigl(1-f_{K,i}(z)\bigr)\Bigr)^\alpha}{\bigl(z(1-z)\bigr)^\alpha}.
 \end{equation}

\textcolor{black}{The discussion above is \textcolor{black}{presented formally in}~\Lref{lemma:Q_n_upperbound}.}

\vspace{2.70ex}
\noindent
{\bf Step\,2: Sharp transitions in the polarization behavior.} 
\textcolor{black}{We fix $\alpha = 1/\log \ell$ and show that as $\ell$ grows, with probability at least $1-\textcolor{black}{o_{\ell}(1)}$ over the random choice of a non-singular $\ell\times\ell$ binary kernel $K$,
we have
\begin{equation}
\label{lambda-bound}
\lambda_{\alpha, K}^* \ = \ O(\ell^{ -\sfrac{1}{2} }\log \ell) \ .
\end{equation}}%
To do so, we prove that, as $\ell$ grows, the 
polarization-behavior polynomials $f_{K,i}(z)$ will ``look like''
step functions for most nonsingular kernels. First note
that $f_{K,i}(z)$ is an increasing polynomial with 
$f_{K,i}(0) = 0$ and $f_{K,i}(1) = 1$, for any $i$ and any $K$. 
As $\ell$ increases, we show that $f_{K,i}(z)$ is likely to have
a sharp transition~threshold around the point $z = i/\ell$. 
More precisely, we prove that
\begin{equation}
\label{sharpness}
\begin{split}
f_{K,i}(z) &\leq \ell^{-(2+\log \ell)}, \quad\quad\quad  
\mbox{for }z \leq \frac{i}{\ell} -  \textcolor{black}{c_5}\ell^{-1/2}\log \ell,
\\
f_{K,i}(z) &\geq 1- \ell^{-(2+\log \ell)}, \quad \mbox{ for }z\geq \frac{i}{\ell} + \textcolor{black}{c_5} \ell^{-1/2}\log \ell,
\end{split}
\end{equation}
with probability at least $1-O(1/\ell)$ over the random choice of $K$, \textcolor{black}{where $c_5$ is a universal constant}. This threshold behavior is illustrated (both schematically and for certain specific kernels of size $\ell = 16$) in Figure~\ref{fig:large_kernels}.

\begin{figure}[t!]
\centering
\includegraphics[width=0.90\textwidth]{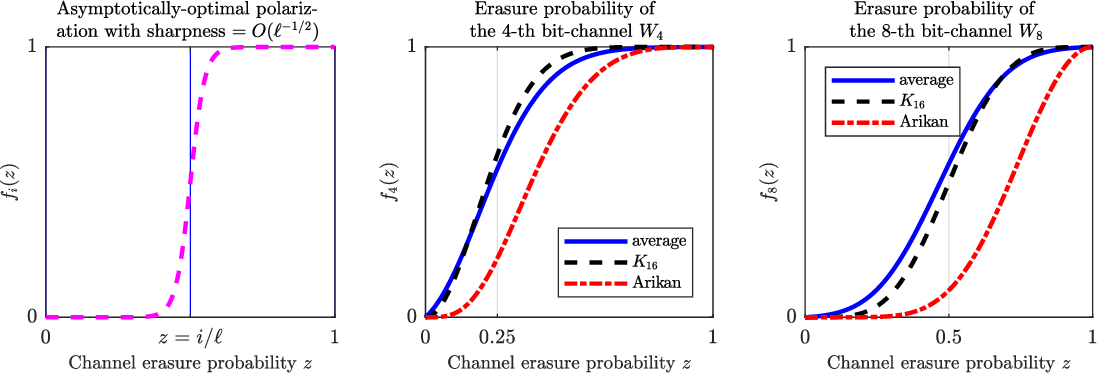}
\caption{The figure on the left illustrates the fact 
that $f_{K,i}(z)$ has a sharp transition of order 
roughly $O\bigl(\ell^{-1/2}\bigr)$ when the kernel $K$ is
chosen at random. The two figures on the right compare three different
choices~of~the~kernel: the red curve corresponds to \Arikan's
kernel; the black curve to the kernel $K_{16}$ from~\cite{F14},
and the blue curve is obtained by taking the average of 
the functions $f_{K,i}(z)$ for all nonsingular $16\times16$ kernels.}
\label{fig:large_kernels}
\end{figure}

\vspace{2.70ex}
\noindent
{\bf Step\,3: Finite-length scaling law.} 
\textcolor{black}{For any fixed $\delta> 0$}, we can derive the finite-length scaling law for polar codes using the results of the previous two steps. 
From \eqref{markoveps}, \eqref{zeta1}, and \eq{lambda-bound},
we conclude that
\begin{equation}
\label{14}
{\mathbb P} \bigl\{Z_m \in [\textcolor{black}{\zeta},1-\textcolor{black}{\zeta}] \bigr\}
\ = \
O\Bigl( {\zeta}^{-\alpha} \bigl(\ell^{\textcolor{black}{-1/(2+\delta)}}\bigr)^m\Bigr).
\end{equation}
Denote the desired \textcolor{black}{block error probability by $\Pe$},
and set $\epsilon = \Pe/n = \Pe \ell^{-m}$ in \eq{14}, \textcolor{black}{as is common in the polarization theory literature due the union upper bound on the block error rate.} Then we have
\begin{equation}
\label{eq:outlinefin}
{\mathbb P} \bigl\{Z_m \in [\Pe \ell^{-m} ,1-\Pe \ell^{-m} ] \bigr\}
\ =  \ 
O\bigl( \ell^{-m/(2+\delta)}\bigr)
\end{equation}
The foregoing is an upper bound on 
the fraction of bit-channels that are not yet sufficiently
polarized after $m$ polarization steps. Later, we will
also provide a simple bound on the fraction 
${\mathbb P}\{Z_m \ge 1-\Pe \ell^{-m}\}$
of bit-channels that are polarized to the useless state.
Note that if we transmit information only on those bit-channels
whose erasure probability is at most $\Pe/n$, then
a~straightforward union-bound argument shows that the overall
probability of error under successive-cancel\-lation decoding
is at most $\Pe$. In essence, the bound in \eq{eq:outlinefin}
implies that the fraction of such ``good'' bit-channels is at 
least $I(W) - O\bigl(\ell^{-m/(2+\delta)}\bigr)$.
Since the block length $n$ is $\ell^m$, this means
that the gap to capacity 
scales roughly as
\smash{$n^{-1/(2+\delta)}$}, which is
the desired scaling law. \textcolor{black}{\Lref{lemma:step1} captures the above discussed argument.}

\vspace{2.00ex}
\section{Main Result}
\label{sec:main_theorem}
\vspace{-0.25ex}

We begin by specializing \Tref{thm1.2} to polar codes 
and stating this result more precisely in \Tref{thm:main_theorem1}. 
We~then gradually reduce the proof of \Tref{thm:main_theorem1} to more
and more specialized statements about large binary kernels. \textcolor{black}{To do so, we start from the following definitions.}\vspace{-1.80ex}

\textcolor{black}{
\begin{definition}[Polar codes with large kernels]
	Consider transmission~over a binary erasure channel $W$ with capacity $I(W)$. Let $\GL(\ell,\Ftwo)$ denote the general linear group of all $\ell\times\ell$ non-singular matrices over $\Ftwo$. Let $K$ be a specific polarization kernel chosen from $\GL(\ell,\Ftwo)$. We define $\mathcal C_K(n, R, \Pe)$ to be a code of rate $R<I(W)$ obtained by polarizing $K$ whose block length $n=\ell^m$ is the smallest such that the error probability under successive cancellation decoding is at most $\Pe$.
\end{definition}
}

\textcolor{black}{
We observe that the code $\mathcal C_K(n, R, \Pe)$ defined above always exists by the results of \cite{KSU} and \cite[Theorem\,5.4]{Sasoglu-FnT}, as long as the matrix $K$ is not equivalent under column permutations to an upper triangular matrix.
}\vspace{-5.40ex}

\textcolor{black}{
\begin{definition}[Upper bound on the scaling exponent]
	Consider transmission~over a binary erasure channel $W$ with capacity $I(W)$. For a fixed polarization kernel $K\! \in \GL(\ell,\Ftwo)$, we define $\overline{\mu}(K)$ to be an upper bound on the scaling exponent of $K$, if for any probability of error $\Pe \in (0,1)$ and any rate  $R<I(W)$, there exists a~$\mathcal C_K(n, R, \Pe)$ polar code whose block length $n$ is upper bounded by 
	\begin{equation}\label{eq:scalingexp}
		\begin{split}
		n\: \leq \frac{\beta}{(I(W)-R)^{\overline{\mu}(K)}}\hspace{.7ex},
		\end{split}
	\end{equation}
	where $\beta$ is a constant that depends only on $K$ and $\Pe$. 
\end{definition}
Note that these definitions are consistent with the way the scaling exponent is defined in the literature, see e.g., \cite[Theorem 1]{Goldin-Burshtein}, \cite[Theorem 1]{MHU2}.}
\textcolor{black}{We are now ready to present our main results.}\vspace{-2.70ex}

\textcolor{black}{
\begin{theorem}%
[Binary polar codes with near-optimal scaling] 
\label{thm:main_theorem1}
\looseness=-1
Consider transmission~over a binary erasure \linebreak channel $W$ with capacity
$I(W)$. Let $K\! \in \GL(\ell,\Ftwo)$ be a kernel selected uniformly at random. Fix $\delta \in (0, 1]$. Then, there exists $\ell_0(\delta)$ such that for any $\ell>\ell_0(\delta)$, with high probability over the choice of $K$,  the scaling exponent of $K$ is upper-bounded by
\begin{equation}\label{eq:main_theorem_scaling_bound}
\overline{\mu}(K) \leq 2+\delta.
\end{equation}  
Furthermore, the constant $\beta$ in~(\ref{eq:scalingexp}) is given by $\ell(1+2\hspace{0.1em}\Pe^{\textcolor{black}{-1}})^{3}$. 
\end{theorem}
}

In fact, what we prove is slightly stronger. Let $\{\mathcal C_K(n=\ell^m, R)\}_{m=1}^{\infty}$ be the family of rate-$R$ large-kernel polar codes obtained by polarizing $K$ for $m$ many steps, where $R<I(W)$. 
We will show that as $\ell$ grows, with high probability over the choice of $K$, the scaling exponent 
can be upper-bounded by 
\begin{align}\label{eq:real_scaling_exponent}
\mu = 2 + O\left(\frac{\log(\log \ell)}{\log \ell}\right). 
\end{align}

In order to treat~(\ref{eq:real_scaling_exponent}) as $\mu\leq 2+\delta$, it suffices to pick \textcolor{black}{a fixed} $\ell$ that is at least in the order of $\exp{(\delta^{-1.01})}$. We denote this minimum value of $\ell$ by $\ell_0(\delta)$. Once again, we emphasize that $\ell_0(\delta)$ scales exponentially with $\sfrac{1}{\delta}$. Note that a fixed value of $\ell$, although being extremely large, does not change the asymptotic code-length, $n$, nor the decoding complexity, $O(n\log n)$, which is the main focus of this work. However, given how large $\ell$ should be and the fact that the decoding complexity of polar codes with arbitrary $\ell\times\ell$ kernels is also multiplied by $2^{\ell}$, it becomes clear that the large-kernel polar codes, whose kernels are chosen at random, are not suitable for the practical purposes. We address the recent advancements on the decoding problem of large kernels at the end of the paper. 

\textcolor{black}{We also point out that, as the rate $R$ approaches the channel capacity $I(W)$, which consequently makes the block length $n$ grow, these codes have construction complexity $O(n)$ and encoding/decoding complexity\, $O(n\log n)$. The claim on the construction complexity follows from the fact that the erasure probabilities of the bit-channels can be computed exactly according to the recursion \eqref{eq:random_process_definition}. The claim on the encoding/decoding complexity follows from \cite[Section\,VII]{KSU}. 
}

The foregoing theorem follows from the following result that characterizes the behavior of the polarization process defined in \eq{eq:random_process_definition}.

\begin{theorem}[\textcolor{black}{Near-optimal} scaling of the polarization process]
\label{thm:polarizationscaling}
Let $K \!\in \GL(\ell,\Ftwo)$ be a kernel selected uniformly~at random \textcolor{black}{from} all $\ell\times\ell$ nonsingular binary matrices. Let $Z_m$ be the random process defined in \eqref{eq:random_process_definition} with initial condition $Z_0 = z$. Fix $\Pe \in (0, 1)$ and a small constant $\delta > 0$. Then, there exists $\ell_0(\delta)$ \textcolor{black}{such that for all $\ell>\ell_0(\delta)$ and for all $m\ge 1$, and almost all $K$}, we have
\begin{align}\label{eq:connection_lemma}
\mathbb{P}\{Z_m \le \Pe\ell^{-m}\} 
\ \geq \
1-z \,-\, \bigl(1+2\hspace{0.1em}\Pe^{\textcolor{black}{-1}}\bigr)\ell^{-\frac{m}{\textcolor{black}{2+\delta}}}.
\end{align}
\end{theorem}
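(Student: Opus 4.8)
The plan is to establish \Tref{thm:polarizationscaling} by decomposing the event $\{Z_m \le \Pe \ell^{-m}\}$ into the three pieces that the process $Z_m$ can be in: the ``already-good'' region $[0, \Pe\ell^{-m}]$, the ``still-unpolarized'' region $(\Pe\ell^{-m}, 1-\Pe\ell^{-m})$, and the ``bad'' region $[1-\Pe\ell^{-m}, 1]$. Since $Z_m$ is a bounded martingale with $\E[Z_m] = z$, the fraction of bit-channels in the bad region is at most $\P\{Z_m \ge 1 - \Pe\ell^{-m}\} \le z/(1-\Pe\ell^{-m})$, and more carefully one can bound the mass in the bad region by roughly $z$ plus a small correction absorbed into the unpolarized term. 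So the heart of the matter is showing that $\P\{Z_m \in [\Pe\ell^{-m}, 1-\Pe\ell^{-m}]\}$ is at most $(1 + 2\Pe^{-0.01})\ell^{-m/(2+\delta)}$ up to lower-order terms, which is exactly the content of Step~3 of the outline, fed by Steps~1 and~2.

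First I would invoke the machinery of Step~1: for a parameter $\alpha > 0$ to be fixed, introduce $g_\alpha(z) = (z(1-z))^\alpha$ and use the supermartingale-type recursion to get $\E[g_\alpha(Z_m)] \le (\lambda_{\alpha,K}^*)^m$, where $\lambda_{\alpha,K}^*$ is the quantity defined in \eqref{zeta2}. Then, conditionally on the high-probability event over the choice of $K$ on which the ``sharpness'' estimates \eqref{sharpness} hold, I would invoke the bound \eqref{lambda-bound}, namely $\lambda_{\alpha,K}^* \le \ell^{-1/2 + 5\alpha}$. Combining these with Markov's inequality \eqref{markoveps} evaluated at $\epsilon = \Pe\ell^{-m}$ gives
\begin{equation*}
\P\{Z_m \in [\Pe\ell^{-m}, 1-\Pe\ell^{-m}]\} \ \le\ \frac{(\ell^{-1/2+5\alpha})^m}{(\Pe\ell^{-m}(1-\Pe\ell^{-m}))^\alpha} \ \le\ (1 + o(1))\,\Pe^{-\alpha}\,\ell^{m\alpha}\,\ell^{-m/2 + 5\alpha m}.
\end{equation*}
The exponent of $\ell^m$ here is $-1/2 + 6\alpha$; writing $1/(2+\delta)$ for this quantity determines the relation between $\alpha$ and $\delta$, and shows $\mu(K) = 1/(1/2 - 6\alpha) \le 2 + \delta$ once $\alpha$ is taken small enough (and $\ell$ large enough that all the ``sufficiently large $\ell$'' caveats in Steps~1--2 are met, which is where $\ell_0(\delta)$ comes from). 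The prefactor $\Pe^{-\alpha}$ is then bounded by $\Pe^{-0.01}$ by choosing $\alpha \le 0.01$ (which is compatible with $\alpha$ small), yielding the constant $1 + 2\Pe^{-0.01}$ after folding in the bad-region contribution and the $(1-\Pe\ell^{-m})^{-\alpha}$ factor.

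The remaining bookkeeping is to convert $\P\{Z_m \in [\Pe\ell^{-m}, 1-\Pe\ell^{-m}]\} + \P\{Z_m \ge 1-\Pe\ell^{-m}\} \le z + (1+2\Pe^{-0.01})\ell^{-m/\mu(K)}$ into the stated lower bound on $\P\{Z_m \le \Pe\ell^{-m}\}$ by taking complements, and to verify that the bad-region mass really is at most $z$ up to the error term — here I would use that $Z_m$ is a martingale started at $z$, so $z = \E[Z_m] \ge (1 - \Pe\ell^{-m})\,\P\{Z_m \ge 1-\Pe\ell^{-m}\}$, hence $\P\{Z_m \ge 1-\Pe\ell^{-m}\} \le z(1 + 2\Pe\ell^{-m}) \le z + 2\Pe\ell^{-m}$, and $2\Pe\ell^{-m}$ is dominated by the main error term for all $m \ge 1$. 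The main obstacle is not this proposition itself — it is essentially an assembly of Steps~1--3 — but rather the inputs it relies on, above all the sharpness estimate \eqref{sharpness} and the resulting bound \eqref{lambda-bound} on $\lambda_{\alpha,K}^*$ \emph{uniformly over $z \in (0,1)$} (not just at $z = 1/2$), whose proof is deferred to Section~\ref{appendix:proof}; the subtlety there is controlling the ratio in \eqref{zeta2} near the transition thresholds $z \approx i/\ell$, where both numerator and denominator are small, and ensuring the union bound over the $\ell$ kernel rows and over $z$ still leaves a $1 - O(1/\ell)$ success probability.
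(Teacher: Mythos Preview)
Your proposal is correct and follows the paper's overall scaffolding (three-region decomposition, the potential $g_\alpha$, Markov's inequality, and the bound $\lambda_{\alpha,K}^* \le \ell^{-1/2+5\alpha}$), but you handle the ``bad'' region differently from the paper, and the difference is worth noting.

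The paper does \emph{not} bound $C = \P\{Z_m \ge 1-\Pe\ell^{-m}\}$ directly. Instead, in \Lref{lemma:step1} it introduces the auxiliary quantities $A',B',C'$ --- the limiting fractions of bit-channels in each region that eventually polarize to the noiseless state --- and uses the polarization theorem to get the exact identity $A'+B'+C' = 1-z$. It then bounds $B' \le B$ by the Markov argument and bounds $C'$ by the capacity of $\BEC(1-\Pe\ell^{-m})$, namely $C' \le \Pe\ell^{-m}$; finally $A \ge A' = (1-z)-B'-C'$. This is why the paper's \Lref{lemma:step1} carries the extra hypothesis that no column permutation of $K$ is upper triangular (so that $K$ polarizes), and why the constant comes out as $2\Pe^{-\alpha}+\Pe \le 1+2\Pe^{-0.01}$.

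Your route --- $\E[Z_m]=z \ge (1-\Pe\ell^{-m})\,C$, hence $C \le z + 2\Pe\ell^{-m}$, then $A = 1-B-C$ --- is more elementary: it uses only that $\{Z_m\}$ is a $[0,1]$-valued martingale (equivalently, that $\sum_i f_{K,i}(z)=\ell z$ for any nonsingular $K$), and it avoids invoking polarization as a black box. The price is a slightly looser constant, $2\Pe^{-\alpha}+2\Pe \le 2+2\Pe^{-0.01}$, which does not quite match the stated $1+2\Pe^{-0.01}$ but is otherwise harmless. Everything else in your outline --- the choice $\alpha \le \min\{0.01,\,\delta/(12(2+\delta))\}$, the identification $\mu(K)=1/(1/2-6\alpha)$, and the deferral of the hard uniform-in-$z$ bound on $\lambda_{\alpha,K}^*$ to Section~\ref{appendix:proof} --- matches the paper.
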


\textcolor{black}{For the sake of clarity, note that  \emph{(i)} in \eqref{eq:connection_lemma} the kernel $K$ is fixed and the probability space is defined with respect to the random process $Z_m$, and \emph{(ii)} the result \eqref{eq:connection_lemma} holds with high probability over the
choice of the kernel $K$. We are now ready to present the proof of
\Tref{thm:main_theorem1}.}

\begin{proof}[Proof of \Tref{thm:main_theorem1}]
\hspace*{-0.54ex} \textcolor{black}{Fix any rate $R$ with $R<I(W)$.} Assuming \Tref{thm:polarizationscaling} 
holds, consider transmission
over $\BEC(z)$ of a polar code with block length $n=\ell^m$ and rate $R$ 
obtained by polarizing the $\ell\times \ell$ kernel $K$, where
$\ell>\ell_0(\delta)$. 
By \Tref{thm:polarizationscaling}, \textcolor{black}{with high probability over the choice of $K$,}
~at~least 
a 
\smash{$1-z-(1+2\hspace{0.1em}\Pe^{\textcolor{black}{-1}})\ell^{-{m}/{\textcolor{black}{2+\delta}}}$} fraction 
of the bit-channels have erasure probability at most $\Pe \ell^{-m}$.
\textcolor{black}{Given that $R < 1-z$, one can always find a positive integer $m$ such that}
\begin{equation}\label{eq:ratefin}
\textcolor{black}{R \: \leq \: 1-z-(1+2\hspace{0.1em}\Pe^{\textcolor{black}{-1}})\ell^{-\frac{m}{\textcolor{black}{2+\delta}}}}.
\end{equation}
A simple union bound yields that the error probability under successive cancellation decoding is at most $\Pe$. \textcolor{black}{Given that $I(W)=1-z$, we can re-arrange \eqref{eq:ratefin} to obtain that}
\begin{align}
\textcolor{black}{
(1+2\Pe^{-1})\ell^{-\frac{m}{\textcolor{black}{2+\delta}}}\leq I(W)-R,
}
\end{align}
\textcolor{black}{which is equivalent to}
\begin{align}
\textcolor{black}{
\frac{1+2\Pe^{\textcolor{black}{-1}}}{\big(I(W)-R\big)} \:\leq\: n^{\frac{1}{\textcolor{black}{2+\delta}}} \: \Leftrightarrow \frac{(1+2\Pe^{\textcolor{black}{-1}})^{\textcolor{black}{2+\delta}}}{\big(I(W)-R\big)^{\textcolor{black}{2+\delta}}} \leq n.
}
\end{align}
W.l.o.g., we can assume that $\delta<1$ and, hence, we can take $\beta$
as prescribed in \Tref{thm:main_theorem1}. \textcolor{black}{Considering that $n$ is a power of $\ell$, for \eqref{eq:ratefin} to hold, it suffices to have}
\begin{align}\label{eq:newb}
\textcolor{black}{
\ceil{\log_{\ell}\bigg(\frac{\beta}{\big(I(W)-R\big)^{\textcolor{black}{2+\delta}}}\bigg)}
\leq \log_{\ell} n\;,  
}
\end{align}
\textcolor{black}{where $\ell$ can be an arbitrary integer such that $\ell \geq \ell_0$. Therefore, the smallest value of $n$ for which the desired code exists is the first integer power of \textcolor{black}{$\ell$} that is not smaller than $\sfrac{\beta}{(I(W)-R)^{2+\delta}}$. Thus, there exists a code with
\begin{align}
n \leq \frac{\beta\textcolor{black}{\ell}}{(I(W)-R)^{2+\delta}}
\end{align}
such that \eqref{eq:newb} holds.
}
\end{proof}

The rest of the section is devoted to the proof of 
\Tref{thm:polarizationscaling}. The basic idea is to bound the
number~of~unpolarized bit-channels. To this end, let us introduce the
polarization measure function $g_{\alpha}(z)$, defined as follows:
\begin{align}
\label{eq:g_alpha_definition}
g_{\alpha}(z) 
\ \triangleq \ 
z^\al(1-z)^{\alpha},
\end{align}
where $\alpha\in(0, 1)$ is a constant parameter to be determined later.
The first step is to show that an upper bound on 
$\mathbb E[g_{\alpha}(Z_m)]$ 
yields a lower bound on 
$\mathbb{P}\{Z_m \le \Pe\ell^{-m} \} $. 
This is accomplished in the following lemma.

\begin{lemma}
\label{lemma:step1}
Let $K\!\in \GL(\ell,\Ftwo)$ be an $\ell\times\ell$ nonsingular binary kernel such that none of its column permuta\-tions is upper triangular. Let $Z_m$ be the random process defined in \eqref{eq:random_process_definition} with initial condition $Z_0 = z$. Fix~a~constant $\alpha\in (0, 1)$ and define $g_{\alpha}(z)$ as in \eqref{eq:g_alpha_definition}. 
Further fix $\textcolor{black}{\rho > 0},\Pe \in (0, 1)$ and assume that
\begin{equation}
\mathbb E[g_{\alpha}(Z_m)]\le \ell^{-m\rho}.
\end{equation}
for all $m\ge 1$. Then, for any $m\ge 1$, we have
\begin{align}
\mathbb{P}\{Z_m \le \Pe\ell^{-m} \} 
\ \geq \ 1-z \ - \ \bigr(2 \Pe^{-\alpha}+\Pe\bigl)\ell^{-m(\rho-\alpha)}.
\end{align}
\end{lemma}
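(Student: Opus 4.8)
The plan is to split the sample space according to how far $Z_m$ has polarized, control the ``unpolarized bulk'' by a Markov inequality applied to $g_\alpha(Z_m)$, and handle the mass that has drifted towards the useless state $1$ by exploiting that $\{Z_m\}$ is a bounded martingale.

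First I would record the martingale identity. Since $K$ is nonsingular, a single polarization step preserves capacity, i.e.\ $\frac1\ell\sum_{i=1}^\ell f_{K,i}(z)=z$ for all $z\in[0,1]$; by the tower rule this makes $\{Z_m\}$ a martingale, so $\mathbb{E}[Z_m]=z$ and hence $\mathbb{E}[1-Z_m]=1-z$ for every $m$. I would then fix $m\ge 1$, put $t_m\triangleq\Pe\ell^{-m}$ (note $t_m\le\tfrac12$ since $\ell\ge 2$ and $\Pe<1$), and partition the sample space into the events $A_m=\{Z_m\le t_m\}$, $B_m=\{t_m<Z_m<1-t_m\}$, and $C_m=\{Z_m\ge 1-t_m\}$.

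To bound $\mathbb{P}(B_m)$ I would use the shape of $g_\alpha$: writing $g_\alpha(z)=(z(1-z))^\alpha$, it is symmetric about $\tfrac12$ and increasing on $(0,\tfrac12]$, so over $[t_m,1-t_m]$ it is minimized at the endpoints, with value $t_m^\alpha(1-t_m)^\alpha\ge\tfrac12\,t_m^\alpha$ (using $(1-t_m)^\alpha\ge(1-t_m)\ge\tfrac12$, valid for $\alpha\in(0,1)$ and $t_m\le\tfrac12$). Thus $g_\alpha(Z_m)\ge\tfrac12\,t_m^\alpha$ on $B_m$, and Markov's inequality together with the hypothesis $\mathbb{E}[g_\alpha(Z_m)]\le\ell^{-m\rho}$ gives
\[
\mathbb{P}(B_m)\ \le\ \frac{\mathbb{E}[g_\alpha(Z_m)]}{\tfrac12\,t_m^\alpha}\ \le\ 2\,\Pe^{-\alpha}\,\ell^{-m(\rho-\alpha)}.
\]
The reason for doing the endpoint analysis rather than the cruder $g_\alpha(Z_m)>t_m^{2\alpha}$ on $B_m$ is precisely to obtain the exponent $\rho-\alpha$ instead of $\rho-2\alpha$.

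For the remaining mass I would decompose $1-z=\mathbb{E}[1-Z_m]$ over $A_m,B_m,C_m$: on $A_m\cup B_m$ bound $1-Z_m\le 1$, and on $C_m$ bound $1-Z_m\le t_m$ (this is where the martingale identity does the work), obtaining
\[
1-z\ \le\ \mathbb{P}(A_m)+\mathbb{P}(B_m)+t_m .
\]
Rearranging and inserting the bound on $\mathbb{P}(B_m)$ yields $\mathbb{P}(A_m)\ge 1-z-2\,\Pe^{-\alpha}\ell^{-m(\rho-\alpha)}-\Pe\,\ell^{-m}$; finally, since $\rho-\alpha<1$ (because $\rho<1$ and $\alpha>0$) we have $\ell^{-m}\le\ell^{-m(\rho-\alpha)}$, so the two error terms merge into $(2\,\Pe^{-\alpha}+\Pe)\,\ell^{-m(\rho-\alpha)}$, which is exactly the claimed inequality. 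I expect the only genuinely delicate point to be the endpoint/shape analysis of $g_\alpha$ in the $\mathbb{P}(B_m)$ step; the martingale bookkeeping and the final manipulation are routine.
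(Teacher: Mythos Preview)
Your proof is correct and, in fact, cleaner than the paper's. The bound on $\mathbb{P}(B_m)$ via Markov's inequality applied to $g_\alpha(Z_m)$ is identical in both arguments. The genuine difference lies in how the mass near $1$ is controlled. The paper introduces auxiliary quantities $A',B',C'$ defined through $\liminf_{m'\to\infty}\mathbb{P}\{Z_{m+m'}\le\ell^{-m'},\,\cdot\,\}$ and invokes the asymptotic polarization theorem of Korada--\c{S}a\c{s}o\u{g}lu--Urbanke (this is precisely where the hypothesis ``no column permutation is upper triangular'' enters) to get $A'+B'+C'=1-z$ and to bound $C'$ by the capacity of $\BEC(1-\Pe\ell^{-m})$. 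You bypass all of this by using only the finite-$m$ martingale identity $\mathbb{E}[1-Z_m]=1-z$ and the pointwise bound $1-Z_m\le t_m$ on $C_m$.

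What your approach buys: it is shorter, avoids any appeal to asymptotic polarization, and as a byproduct shows that the ``no upper-triangular column permutation'' hypothesis in the lemma is in fact unnecessary---nonsingularity alone suffices, since that is all that is needed for capacity preservation and hence for the martingale property. What the paper's approach buys: nothing additional for this lemma, though the $A',B',C'$ decomposition is in the spirit of tracking which bit-channels eventually polarize to good, which may feel more natural in the broader narrative of the paper.
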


\begin{proof}
First of all, 
we upper bound ${\mathbb P}\{Z_m \in \left[\Pe\hspace{0.1em} \ell^{-m}, 1-\Pe\hspace{0.1em} \ell^{-m}\right]\}$ as follows: 
\begin{equation}
\label{eq:B}
\begin{split}
{\mathbb P}\!\left\{Z_m \in \left[\Pe\hspace{0.1em} \ell^{-m}, 1-\Pe\hspace{0.1em} \ell^{-m}\right]\right\}
&\,\stackrel{\mathclap{\mbox{\footnotesize(a)}}}{=} 
\:{\mathbb P}\!\left\{g_{\alpha}(Z_m) \ge g_{\alpha}(\Pe\hspace{0.1em}\ell^{-m})\right\}
\\
&\stackrel{\mathclap{\mbox{\footnotesize(b)}}}{\le} 
\:\frac{{\mathbb E}[g_{\alpha}(Z_m)]}{g_{\alpha}(\Pe\hspace{0.1em} \ell^{-m})}
\\
&\stackrel{\mathclap{\mbox{\footnotesize(c)}}}{\le} 
\: \frac{\ell^{-m\rho}}{g_{\alpha}(\Pe\hspace{0.1em} \ell^{-m})}
\\
&\stackrel{\mathclap{\mbox{\footnotesize(d)}}}{\le} 
\:2\hspace{0.1em} \Pe^{-\alpha} \hspace{0.1em} \ell^{-m(\rho-\alpha)},
\end{split}
\end{equation}
where equality (a) uses the concavity of 
$g_\alpha(\cdot)$ \textcolor{black}{together with its symmetry around $1/2$}; inequality (b) follows from Markov's inequality;
inequality (c) uses the hypothesis ${\mathbb E}[g_\alpha(Z_m)]\le
\ell^{-m\rho}$; and inequality (d) uses the fact that
$1-\Pe\hspace{0.1em}\ell^{-m}\ge 1/2$ for all $m\ge 1$.
Now, let us \textcolor{black}{fix $m$ from this point forward and} define
\begin{equation}\label{eq:defABC}
\begin{split}
A\: &\triangleq\ {\mathbb P}\left\{Z_m \in \left(0, \Pe\hspace{0.1em} \ell^{-m}\right)\right\},\\ 
B\: &\triangleq\ {\mathbb P}\left\{Z_m \in \left[\Pe\hspace{0.1em} \ell^{-m}, 1-\Pe\hspace{0.1em} \ell^{-m}\right]\right\},\\ 
C\: &\triangleq\ {\mathbb P}\left\{Z_m \in \left(1-\Pe\hspace{0.1em} \ell^{-m}, 1\right)\right\},\\ 
\end{split}
\end{equation}
and let $A'$, $B'$, and $C'$ be the fraction of bit-channels in $A$, $B$, and $C$, respectively, that will have a vanishing erasure probability as $n\to \infty$. More formally, we define
\begin{equation}\label{eq:Aprime}
\begin{split}
A' &\triangleq\ \lim_{m'\to \infty} {\mathbb P}\left\{Z_m \in \left(0, \Pe\hspace{0.1em} \ell^{-m}\right), Z_{m+m'}\le \ell^{-m'}\right\},\\ 
B' &\triangleq\ \lim_{m'\to \infty} {\mathbb P}\left\{Z_m \in \left[\Pe\hspace{0.1em} \ell^{-m}, 1-\Pe\hspace{0.1em} \ell^{-m}\right], Z_{m+m'}\le \ell^{-m'}\right\},\\ 
C' &\triangleq\ \lim_{m'\to \infty} {\mathbb P}\left\{Z_m \in \left(1-\Pe\hspace{0.1em} \ell^{-m}, 1\right), Z_{m+m'}\le \ell^{-m'}\right\}.\\ 
\end{split}
\end{equation} 
\textcolor{black}{We now show that the limits in~(\ref{eq:Aprime}) exists, and consequently the quantities $A'$, $B'$ and $C'$ are well defined. To do so, it suffices to prove that, for any $z\in (0, 1)$, the following limit exists
\begin{equation}\label{eq:limnew1}
\lim_{m'\to \infty} {\mathbb P}\left\{Z_{m+m'}\le \ell^{-m'}\mid Z_m=z\right\},
\end{equation}
which is equivalent to proving the existence of 
\begin{equation}\label{eq:limnew2}
\lim_{m'\to \infty} {\mathbb P}\left\{Z_{m'}\le \ell^{-m'+m}\mid Z_0=z\right\}.
\end{equation}
Note that, for any $\beta>0$,
\begin{equation}
\liminf_{m'\to \infty} {\mathbb P}\left\{Z_{m'}\le \ell^{-m'+m}\mid Z_0=z\right\}\ge \liminf_{m'\to \infty} {\mathbb P}\left\{Z_{m'}\le 2^{-\ell^{m'\beta}}\mid Z_0=z\right\}.
\end{equation}
From the proof of \cite[Theorem 2]{KSU}, we have that the random variable $Z_{m'}$ converges almost surely to a $\{0,1\}$ random variable $Z_{\infty}$. Furthermore, an application of \cite[Lemma 5.9]{Sasoglu-FnT} gives that, for $\beta<E(K)$,
\begin{equation}
\liminf_{m'\to \infty} {\mathbb P}\left\{Z_{m'}\le 2^{-\ell^{m'\beta}}\mid Z_0=z\right\}={\mathbb P}\left\{Z_\infty=0\right\},
\end{equation}
where $E(K)>0$ for all $\ell \times \ell$ nonsingular binary matrices none of whose column permutations is upper triangular. Suppose now that
\begin{equation}
\limsup_{m'\to \infty} {\mathbb P}\left\{Z_{m'}\le \ell^{-m'+m}\mid Z_0=z\right\}>{\mathbb P}\left\{Z_\infty=0\right\}.
\end{equation}
Then, $Z_{m'}$ cannot converge in probability to $Z_\infty$ \cite[Page 70, Eq. (5)]{chung}. However, this cannot be possible since almost sure convergence implies convergence in probability \cite[Theorem 4.1.2]{chung}. Thus, 
\begin{equation}
\limsup_{m'\to \infty} {\mathbb P}\left\{Z_{m'}\le \ell^{-m'+m}\mid Z_0=z\right\}=\liminf_{m'\to \infty} {\mathbb P}\left\{Z_{m'}\le \ell^{-m'+m}\mid Z_0=z\right\},
\end{equation} 
and the limits in \eqref{eq:limnew1} and \eqref{eq:limnew2} exist.
}
Note that 
\begin{equation}\label{eq:sumABC}
A' + B' +C' 
~=~
\lim_{m'\to \infty} {\mathbb P}\left\{Z_{m+m'}\le \ell^{-m'}\right\}
~=~
1-z.
\end{equation}
In addition, from \eqref{eq:B} we have that
\begin{equation}\label{eq:Bprime}
B' \le B \le 2\hspace{0.1em} \Pe^{-\alpha} \hspace{0.1em} \ell^{-m(\rho-\alpha)}.
\end{equation}
In order to upper bound $C'$, we proceed as follows:
\begin{equation}\label{eq:Cprimecalc}
\begin{split}
C' &= \lim_{m'\to \infty} {\mathbb P}\left\{Z_{m+m'}\le \ell^{-m'}\mid Z_m \in \left(1-\Pe\hspace{0.1em} \ell^{-m}, 1\right]\right\} \cdot {\mathbb P}\left\{Z_m \in \left(1-\Pe\hspace{0.1em} \ell^{-m}, 1\right]\right\} \\
&\le \lim_{m'\to \infty} {\mathbb P}\left\{Z_{m+m'}\le \ell^{-m'}\mid Z_m \in \left(1-\Pe\hspace{0.1em} \ell^{-m}, 1\right]\right\}.
\end{split}
\end{equation}
By using again the fact that the kernel $K$ is polarizing, we obtain that the last term equals the capacity of a BEC with erasure probability at least $1-\Pe\hspace{0.1em} \ell^{-m}$. Consequently,
\begin{equation}\label{eq:Cprime}
C' \le \Pe\hspace{0.1em} \ell^{-m}.
\end{equation}  
As a result, we conclude that
${\mathbb P}\{Z_m \in \left[0, \Pe\hspace{0.1em} \ell^{-m}\right)\} = A$ 
is bounded as follows
\begin{align*}
A \,\ge\, A' 
~\stackrel{\mathclap{\mbox{\footnotesize(a)}}}{=}~
1-z- B' - C' 
~\stackrel{\mathclap{\mbox{\footnotesize(b)}}}{\ge}~ 
1-z - 2\hspace{0.1em} \Pe^{-\alpha} \hspace{0.1em} \ell^{-m(\rho-\alpha)} - \Pe\hspace{0.1em} \ell^{-m}
~\stackrel{\mathclap{\mbox{\footnotesize(c)}}}{\ge}~
1-z - \left(2\hspace{0.1em}\Pe^{-\alpha}+\Pe\right)\ell^{-m(\rho-\alpha)},
\end{align*}
where equality (a) uses \eqref{eq:sumABC}; inequality (b) uses \eqref{eq:Bprime} and \eqref{eq:Cprime}; and inequality (c) uses the fact that \textcolor{black}{since $\alpha,\rho \in(0,1)$ then $\rho-\alpha <1$}. This chain of inequalities implies the desired result.
\end{proof}

The second step is to derive an upper bound on $\mathbb
E[g_{\alpha}(Z_m)]$ of the form $(\lambda_{\alpha,K}^*)^{m}$, where
$\lambda_{\alpha,K}^*$ depends on the particular kernel $K$. This is
accomplished in \Lref{lemma:Q_n_upperbound}, whose statement and proof
follow.

\begin{lemma}
\label{lemma:Q_n_upperbound}
Let $K\!\in \GL(\ell,\Ftwo)$ be a fixed $\ell\times\ell$ binary kernel.
Let $Z_m$ be the random process defined 
in \eqref{eq:random_process_definition}~with initial condition $Z_0 = z$. Fix $\alpha\in (0, 1)$ and define $g_{\alpha}(z)$ as in \eqref{eq:g_alpha_definition}. For $z\in (0, 1)$, define $\lambda_{\alpha,K}(z)$  as
\begin{align}\label{eq:lambda_definition}
\lambda_{\alpha,K}(z) \triangleq \frac{\frac{1}{\ell}\sum_{i=1}^{\ell}g_{\alpha}(f_{K,i}(z))}{g_{\alpha}(z)},
\end{align}
and let $\lambda_{\alpha,K}^*$ be its supremum, i.e., 
\begin{align}
\label{eq:lambda_definition2}
\lambda_{\alpha,K}^*  \triangleq\underset{z\in(0,1)}\sup \lambda_{\alpha,K}(z).
\end{align}
Then, for any $m\ge 0$, we have that
\begin{align}\label{eq:Q_n_upperbound_lambda}
\mathbb E[g_{\alpha}(Z_m)]\leq (\lambda_{\alpha,K}^*)^{m}g_{\alpha}(z). 
\end{align}
\end{lemma}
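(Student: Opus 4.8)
The plan is to establish \eqref{eq:Q_n_upperbound_lambda} by induction on $m$, using the recursive definition of $Z_m$ in \eqref{eq:random_process_definition} together with the observation that, conditioned on $Z_{m-1}$, the next value $Z_m$ is obtained by applying one of the $\ell$ polarization polynomials $f_{K,i}$ chosen uniformly at random. The base case $m=0$ is immediate, since $Z_0 = z$ is deterministic and the inequality reads $g_\alpha(z) \le g_\alpha(z)$.

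For the inductive step, suppose $\mathbb{E}[g_\alpha(Z_{m-1})] \le (\lambda_{\alpha,K}^*)^{m-1} g_\alpha(z)$. First I would condition on the value of $Z_{m-1}$ and average over the choice $B_{m-1} \sim \textsf{Uniform}[\ell]$:
\begin{align*}
\mathbb{E}\bigl[g_\alpha(Z_m) \bigm| Z_{m-1}\bigr]
\ = \ \frac{1}{\ell}\sum_{i=1}^\ell g_\alpha\bigl(f_{K,i}(Z_{m-1})\bigr).
\end{align*}
The point is that this conditional expectation equals $\lambda_{\alpha,K}(Z_{m-1}) \cdot g_\alpha(Z_{m-1})$ by the very definition \eqref{eq:lambda_definition} of $\lambda_{\alpha,K}(z)$ — provided $g_\alpha(Z_{m-1}) > 0$, i.e. $Z_{m-1} \in (0,1)$. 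Since $\lambda_{\alpha,K}(z) \le \lambda_{\alpha,K}^*$ for all $z \in (0,1)$ by \eqref{eq:lambda_definition2}, we get
\begin{align*}
\mathbb{E}\bigl[g_\alpha(Z_m) \bigm| Z_{m-1}\bigr]
\ \le \ \lambda_{\alpha,K}^* \cdot g_\alpha(Z_{m-1}).
\end{align*}
Taking total expectation and invoking the inductive hypothesis then yields $\mathbb{E}[g_\alpha(Z_m)] \le \lambda_{\alpha,K}^* \cdot (\lambda_{\alpha,K}^*)^{m-1} g_\alpha(z) = (\lambda_{\alpha,K}^*)^m g_\alpha(z)$, completing the induction.

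The one subtlety — and the place I expect the only real friction — is handling the boundary values $z \in \{0,1\}$, where $g_\alpha(z) = 0$ and the ratio defining $\lambda_{\alpha,K}(z)$ is a $0/0$ indeterminacy, so the identity $\mathbb{E}[g_\alpha(Z_m)\mid Z_{m-1}] = \lambda_{\alpha,K}(Z_{m-1}) g_\alpha(Z_{m-1})$ must be interpreted carefully. This is resolved by noting that each $f_{K,i}$ is a polynomial with $f_{K,i}(0)=0$ and $f_{K,i}(1)=1$ (as recalled in the excerpt's discussion of \eqref{sharpness}), so if $Z_{m-1} \in \{0,1\}$ then $Z_m \in \{0,1\}$ as well and both sides of the conditional inequality are simply $0 \le 0$; the fixed points $\{0,1\}$ are absorbing, so the bound is trivially maintained on that event. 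For $Z_{m-1} \in (0,1)$ the argument above applies verbatim. One should also verify that $\lambda_{\alpha,K}^*$ is well-defined (finite): since each $g_\alpha(f_{K,i}(z))$ is a polynomial in $z$ vanishing to order at least $\alpha$ at $z=0$ and at $z=1$ — matching the vanishing of $g_\alpha(z)$ — the ratio $\lambda_{\alpha,K}(z)$ extends to a bounded continuous function on $(0,1)$ whose supremum is attained; this is where the hypothesis that $f_{K,i}(0)=0$, $f_{K,i}(1)=1$ (equivalently, that $K$ is nonsingular, so the polar transform preserves the extreme channels) is used, though for the statement as phrased we only need $\lambda_{\alpha,K}^* < \infty$, which is all the induction requires.
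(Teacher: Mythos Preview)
Your proof is correct and follows essentially the same approach as the paper: induction on $m$, conditioning on $Z_{m-1}$, and bounding the one-step conditional expectation by $\lambda_{\alpha,K}^*\,g_\alpha(Z_{m-1})$. If anything, you are more careful than the paper, which does not explicitly address the absorbing boundary $\{0,1\}$ or the finiteness of $\lambda_{\alpha,K}^*$.
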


\begin{proof}
We prove the claim by induction. The base step $m=0$ follows
immediately from the fact that $Z_0=z$. To prove the inductive step,
we write
\begin{align}
\mathbb E\big[g_{\alpha}(Z_{m+1})\big] &= \mathbb{E}\big[\mathbb{E}[g_{\alpha}(f_{\textcolor{black}{K,B_m}}(Z_{m})) \mid Z_m]\big],
\end{align}
where the first (outer) expectation on the RHS is with respect to $Z_m$ 
and the second (inner) expectation is with respect to $B_m$. Then, we have 
that 
\begin{align}
\begin{split}
\mathbb{E}\big[\mathbb{E}[g_{\alpha}(f_{\textcolor{black}{K,B_m}}(Z_{m})) \mid Z_m]\big] 
~&=~
\mathbb{E}\bigg[ g_{\alpha}(Z_{m}) \frac{\frac{1}{\ell}\sum_{i=1}^{\ell}g_{\alpha}(f_{K,i}(Z_m))}{g_{\alpha}(Z_{m})}\bigg]
\\
~&\leq~ \mathbb E\big[g_{\alpha}(Z_{m})\big] \!\!\underbrace{\underset{z\in\{0,1\}}{\text{sup}}\frac{\frac{1}{\ell}\sum_{i=1}^{\ell}g_{\alpha}(f_{K,i}(z))}{g_{\alpha}(z)}}_{\lambda_{\alpha,K}^*}.
\end{split}
\end{align}
\end{proof}

The third and final step is to prove that $\lambda_{\alpha,K}^*$
concentrates around $1/\sqrt{\ell}$, when $K$ is selected uniformly at
random among all $\ell\times\ell$ nonsingular binary matrices. This
is done in \Tref{thm:main_theorem2}, which is stated below.

\textcolor{black}{
\begin{theorem}[Concentration of $\lambda_{\alpha,K}^*$]
\label{thm:main_theorem2}
Let $K\!\in \GL(\ell,\Ftwo)$ be a kernel selected uniformly at random
among all $\ell\times\ell$ non-singular binary matrices. Set
$\alpha={1}/{\log\ell}$ and define $\lambda_{\alpha,K}^*$ as in
\eqref{eq:lambda_definition2}. Then, there exists a universal constant $c$ such that as $\ell$ grows, we have 
\begin{align}\label{eq:ineq_lambda_main}
\mathbb{P} 
\left\{ \lambda_{\alpha,K}^* \leq c\ell^{-\frac{1}{2}}\log \ell\right\}
~\geq~
1-o(1)\hspace{1.5ex},
\end{align}
where the probability space is defined over the choice of the kernel $K$.  
\end{theorem}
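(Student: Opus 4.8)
The plan is to reduce the statement to estimates about uniformly random binary matrices, to show that the polarization‑behaviour polynomials $f_{K,i}$ look like step functions for a typical kernel, and then to bound $\lambda_{\alpha,K}(z)$ uniformly in $z$ by a careful split into regimes. Throughout, "kernel'' means a uniform element of $\GL(\ell,\Ftwo)$.

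\textbf{Reformulation and elementary facts.} I would first record that, for $W=\mathrm{BEC}(z)$, successive‑cancellation decoding with non‑erased coordinate set $E\subseteq[\ell]$ (each index kept independently with probability $1-z$) recovers the bit $U_i$ precisely when $e_1$ lies in the column space of the submatrix $K_{\{i,\dots,\ell\},E}$ on rows $i,\dots,\ell$ and columns $E$; hence $f_{K,i}(z)=\mathbb{P}_E[e_1\notin\mathrm{colspace}\,K_{\{i,\dots,\ell\},E}]=\sum_{t=0}^{\ell}N_{i,t}\,z^t(1-z)^{\ell-t}$, where $N_{i,t}$ counts the weight‑$t$ erasure patterns that make $U_i$ unrecoverable. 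From this one reads off that $f_{K,i}$ is increasing of degree $\le\ell$, that $f_{K,i}(0)=0$ for nonsingular $K$ (independent rows), that $f_{K,i}'(0)=N_{i,1}$, and — combining with the capacity‑preservation identity $\sum_i I(W_i)=\ell\,I(W)$ already used in the excerpt — the conservation law $\sum_{i=1}^{\ell}f_{K,i}(z)=\ell z$, in particular $f_{K,i}(z)\le\ell z$ for all $i,z$. I would also note the distributional symmetry $\{f_{K,i}(z)\}_i\stackrel{d}{=}\{1-f_{K,i}(1-z)\}_i$ (the uniform nonsingular kernel is invariant under transpose and coordinate reversal), which, since $g_{\alpha}(t)=g_{\alpha}(1-t)$ and $g_\alpha(z)=g_\alpha(1-z)$, lets us treat only $z\in(0,\tfrac12]$ at the cost of doubling the final failure probability.

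\textbf{Step‑function behaviour (the main event).} The core is to show that with probability $\ge1-O(1/\ell)$ over $K$ a single good event $\mathsf G$ holds: (i) for every $i$, $f_{K,i}(z)\le\ell^{-(2+\log\ell)}$ whenever $z\le i/\ell-5\ell^{-1/2}\log\ell$ and $f_{K,i}(z)\ge1-\ell^{-(2+\log\ell)}$ whenever $z\ge i/\ell+5\ell^{-1/2}\log\ell$ (this is \eqref{sharpness}); and (ii) $N_{i,1}=0$ for every $i>2\log_2\ell+1$. Since $f_{K,i}$ is monotone it suffices to verify (i) at the two threshold values of $z$; by Markov, $\mathbb{P}_K[f_{K,i}(z)>\ell^{-(2+\log\ell)}]\le\ell^{2+\log\ell}\,\mathbb{P}_{K,E}[e_1\notin\mathrm{colspace}\,K_{\{i,\dots,\ell\},E}]$. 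Replacing the uniform nonsingular kernel by a fully uniform one costs only a constant factor, and then $K_{\{i,\dots,\ell\},E}$ is a uniform $(\ell-i+1)\times|E|$ matrix; a Chernoff bound on $|E|$ (whose mean $(1-z)\ell$ sits $\approx5\sqrt\ell\log\ell$ away from $\ell-i+1$ at the thresholds) together with the elementary estimates $\mathbb{P}[\text{uniform }r\times s\text{ not full row rank}]\le2^{r-s}$ and $\mathbb{P}[e_1\in\mathrm{colspace}\text{ of uniform }r\times s]\le2^{s-r}$ gives $\mathbb{P}_{K,E}[\,\cdot\,]\le\ell^{-c\log\ell}$ with $c$ as large as we like, which dominates $\ell^{2+\log\ell}$ and leaves room for a union bound over $i$. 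Part (ii) is a short matroid count: $N_{i,1}$ is at most the number of coloops of $K_{\{i,\dots,\ell\},[\ell]}$, a uniform full‑rank $(\ell-i+1)\times\ell$ matrix of redundancy $i-1$, so $\mathbb{P}[N_{i,1}>0]\le\ell\,2^{O(1)-i}$, which sums to $O(1/\ell)$ over $i>2\log_2\ell+1$.

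\textbf{From $\mathsf G$ to the bound on $\lambda_{\alpha,K}^*$.} On $\mathsf G$ (so $K$ is in particular nonsingular), I would bound $\lambda_{\alpha,K}(z)=\tfrac1\ell\sum_i g_{\alpha}(f_{K,i}(z))/g_{\alpha}(z)$ for $z\in(0,\tfrac12]$ by two overlapping cases. For moderate $z$, say $z\ge\ell^{\,1/(2\alpha)-7-\log\ell}$ (which by $\log\ell>1/\alpha$ lies far below $\tfrac12$): split $[\ell]$ into indices with $i/\ell$ below, inside, or above a window of half‑width $5\sqrt\ell\log\ell$ about $\ell z$; outside the window $f_{K,i}(z)(1-f_{K,i}(z))\le\ell^{-(2+\log\ell)}$ by (i), contributing at most $\ell^{-\alpha(2+\log\ell)}$ to the average, while inside there are $\le10\sqrt\ell\log\ell$ indices for which $g_\alpha(f_{K,i}(z))\le f_{K,i}(z)^\alpha$ and, by concavity of $t\mapsto t^{\alpha}$ together with $\sum_i f_{K,i}(z)\le\ell z$, $\sum_{\mathrm{inside}}f_{K,i}(z)^{\alpha}\le(10\sqrt\ell\log\ell)^{1-\alpha}(\ell z)^{\alpha}$; dividing by $g_{\alpha}(z)\ge(z/2)^{\alpha}$ yields $\lambda_{\alpha,K}(z)\le2^{\alpha}\ell^{-\alpha(2+\log\ell)}z^{-\alpha}+O\bigl((\log\ell)^{1-\alpha}\ell^{-1/2+\alpha/2}\bigr)$, and the lower bound on $z$ plus $\ell$ large makes both terms $\le\ell^{-1/2+5\alpha}$. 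For very small $z$, say $z\le\ell^{\,3-1/(2\alpha)}$ (which by $\log\ell>1/\alpha$ exceeds the previous threshold, so the cases overlap, and which also forces $z\le1/\ell$): use (ii) — for $i>2\log_2\ell+1$, $N_{i,1}=0$ gives $f_{K,i}(z)=\sum_{t\ge2}N_{i,t}z^t(1-z)^{\ell-t}\le\sum_{t\ge2}\binom{\ell}{t}z^t\le e(\ell z)^2$, so $\tfrac1\ell\sum_{i>2\log_2\ell}g_{\alpha}(f_{K,i}(z))\le e^{\alpha}\ell^{2\alpha}z^{2\alpha}$, while the $\le2\log_2\ell+1$ remaining indices contribute $O(\ell^{\alpha-1}\log\ell\cdot z^{\alpha})$ via $f_{K,i}(z)\le\ell z$; dividing by $g_{\alpha}(z)\ge(z/2)^{\alpha}$ and using the bound on $z$ again gives $\le\ell^{-1/2+5\alpha}$ (the exponents were chosen with slack). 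This covers all $z\in(0,\tfrac12]$, and the symmetry of the first paragraph extends it to $(0,1)$; hence $\log_{\ell}\lambda_{\alpha,K}^*\le-\tfrac12+5\alpha$ on $\mathsf G$, and $\mathbb{P}[\mathsf G]\ge1-2/\ell$ once $\ell>\ell_1(\alpha)$.

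\textbf{Main obstacle.} The random‑matrix content behind the step‑function picture, once phrased via column‑space membership, is fairly standard rank concentration. The genuinely delicate part is the last paragraph: converting the qualitative ``$f_{K,i}$ is a step at $i/\ell$'' into the quantitative $\lambda_{\alpha,K}^*\le\ell^{-1/2+5\alpha}$ \emph{uniformly in $z$}, and in particular as $z\to0$ or $z\to1$, where the normalizer $g_{\alpha}(z)$ degenerates and the conservation law or a bare application of Jensen only give $\lambda_{\alpha,K}(z)\le O(1)$. Getting below $1$ there is exactly what forces the auxiliary fact $N_{i,1}=0$ for $i\gtrsim\log\ell$ and the Taylor‑type estimate $f_{K,i}(z)\le e(\ell z)^2$; making the moderate‑$z$ and very‑small‑$z$ ranges dovetail is what pins down the required size $\ell>\ell_1(\alpha)$, with $\log\ell$ comfortably larger than $1/\alpha$ — consistent with $\ell_1(\alpha)=\exp(\Theta(1/\alpha))$, up to the $1.01$ in the exponent quoted later.
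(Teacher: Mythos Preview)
Your overall strategy is sound and would lead to a valid proof, but it follows a genuinely different route from the paper's, and one step is under‑justified.

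\textbf{Where you and the paper agree.} Both arguments first establish the sharp‑threshold picture \eqref{sharpness} for $f_{K,i}$ via Markov's inequality on $\mathbb{E}_K[f_{K,i}(z)]$ at two carefully chosen values of $z$ per index, then union‑bound over $i$. The paper does this through explicit closed forms for the conditional erasure probabilities $p_{i|s}$ in terms of Gaussian binomials (Lemmas~\ref{lemma:randomkernel_conditional_erasure_probability}–\ref{lemma:randomkernel_caseI_erasure_probability_upperbound}); your random‑matrix rank estimates after ``replacing $\GL$ by fully uniform at constant cost'' achieve the same thing and are arguably cleaner.

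\textbf{Where the two proofs diverge.} The paper splits $(0,1)$ at $z=1/\ell^2$. On the middle interval it uses only the trivial bound $g_\alpha\!\le\!1$ for the $O(\sqrt{\ell}\log\ell)$ indices inside the transition window, together with $g_\alpha(z)\ge \ell^{-2\alpha}$; on the tails it applies Markov to the \emph{conditional} quantities $q_{i|s}(K)$ to obtain $f_{K,i}(z)\le C\ell^4(2/3)^{i}z$ (see \eqref{eq:caseII_upperbound_f_i_z_step5}), whose geometric decay in $i$ makes the resulting bound on $\lambda_{\alpha,K}(z)$ \emph{uniform in $z$}. Your tail device is different: you force $N_{i,1}=0$ for $i\gtrsim\log\ell$ via a coloop count, which yields only $f_{K,i}(z)=O((\ell z)^2)$ and hence a $z$‑dependent tail bound. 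To make your two regimes meet you are therefore obliged to push the ``moderate'' interval far below $1/\ell^2$, and that in turn forces the refinement you introduce for the in‑window indices, namely Jensen on $t\mapsto t^{\alpha}$ combined with the conservation law $\sum_i f_{K,i}(z)=\ell z$. This is a nice idea and is not in the paper; it is exactly what lets your inside‑window contribution stay $O\bigl((\log\ell)^{1-\alpha}\ell^{-1/2+\alpha/2}\bigr)$ independently of $z$, whereas the paper never needs this because its tail bound is $z$‑free.

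\textbf{The one genuine loose end.} Your reduction to $z\in(0,\tfrac12]$ rests on the distributional identity $\{f_{K,i}(z)\}_i\stackrel{d}{=}\{1-f_{K,i}(1-z)\}_i$, justified only by ``transpose and coordinate reversal''. That phrase does not by itself produce the required bijection on $\GL(\ell,\Ftwo)$; the relevant duality relates $K$ to a specific transformed kernel (not simply $K^{T}$ or $JKJ$), and even granting the identity, your good event $\mathsf G$ is \emph{not} invariant under it (part~(ii) is a statement about behaviour near $z=0$ only). You can repair this either by proving the distributional symmetry carefully at the level of $\sup_{z\le 1/2}\lambda_{\alpha,K}(z)$, or---as the paper does---by simply rerunning the tail argument for $z\to 1$ with the companion bound \eqref{eq:randomkernel_caseI_erasure_probability_lowerbound_step1} in place of \eqref{eq:randomkernel_caseI_erasure_probability_upperbound_step3}. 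Either fix is routine; once made, your constants can be tightened to meet the stated $1-2/\ell$.
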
}

At this point, we are ready to put everything together and present
a proof of \Tref{thm:polarizationscaling}, assuming that 
\Tref{thm:main_theorem2} holds. The proof of \Tref{thm:main_theorem2}
is deferred to the next section.

\begin{proof}[Proof of Theorem \ref{thm:polarizationscaling}]
A simple counting over the binary subspaces of dimension $\ell$ shows that 
\begin{align}
\frac{\text{number of \textcolor{black}{non-singular} but non-polarizing $\ell\times\ell$ binary kernels}}{\text{number of non-singular $\ell\times\ell$ binary matrices}} = \frac{\ell!2^{\frac{\ell(\ell-1)}{2}}}{\prod_{i=0}^{\ell-1}(2^{\ell}-2^{i})} \leq \frac{1}{2^{\ell}} \hspace{1ex}\text{ for all }\hspace{1ex} \ell\geq 5.
\end{align}
\textcolor{black}{Therefore, as $\ell$ grows, with probability at least $1-2^{-\ell} = 1 - o(1)$ over the choice of the kernel $K$,} $K$ is such that none of its column permutations is upper triangular. 
\textcolor{black}{By Theorem \ref{thm:main_theorem2}, as $\ell$ grows, with probability at least $1-o(1)$ over the choice of the kernel $K$, we also have that
\smash{$\lambda_{\alpha,K}^* \leq c\ell^{-1/2}\log\ell$}. Given that the intersection of these two sets also has  probability at least $1-o(1)$,} most choices of $K$ satisfy both conditions for sufficiently large $\ell$. Fix any such kernel. Consequently, as $g_{\alpha}(z)\le 1$ for all $z\in (0, 1)$, by Lemma \ref{lemma:Q_n_upperbound} we have that 
\begin{equation}
\mathbb E[g_{\alpha}(Z_m)]\leq 
\textcolor{black}{
\big(c\ell^{-1/2}\log\ell\big)^m,
}
\end{equation}
where the expectation is over the uniform selection of the polar bit-channel index, or in other words, the random process $Z_m$. \textcolor{black}
{Taking into the account that $\alpha = 1/\log\ell$, we can apply Lemma \ref{lemma:step1} to deduce that
}
\begin{equation}
\label{eq:finstepproc}
\mathbb{P}\{Z_m \le \Pe\ell^{-m} \} ~\geq~1-z-
\textcolor{black}{
c_1(2c\ell^{-1/2}\log \ell)^m,
}
\end{equation}
where $c_1 = 2 \Pe^{-\alpha}+\Pe$ and the probability space is defined with respect to the random selection of the index. Note that, as \textcolor{black}{$\alpha\le 1$} and $\Pe\le 1$, we have that $c_1\le 1+2\Pe^{\textcolor{black}{-1}}$. 
\textcolor{black}{
The theorem immediately follows by picking $\ell_0(\delta)$ to be large enough such that 
\begin{align}
2c\ell_{0}(\delta)^{-1/2}\log \ell_{0}(\delta) \leq \ell_{0}(\delta)^{-\frac{1}{2+\delta}},
\end{align}
which is of the order $O(\exp(\delta^{-1.01}))$.
}
\end{proof}



\vspace{2.70ex}
\section{Proof of Theorem \ref{thm:main_theorem2}: Concentration of $\lambda_{\alpha,K}^*$}\label{appendix:proof}

Recall that our goal is to show that for most non-singular binary kernels 
$K\in\GL(\ell,\Ftwo)$,
\begin{align}\label{eq:maintheorem_target_function}
\lambda_{\alpha,K}(z) \leq \textcolor{black}{
c\ell^{-\frac{1}{2}}\log \ell
}
\hspace{2ex}\forall z\in(0,1),
\end{align}
\textcolor{black}{
where $\alpha=1/\log \ell$ is fixed and $c$ is some universal constant.}
Our strategy is to split the interval $(0,1)$ into the three sub-intervals $(0,1/\ell^2)$, $[1/\ell^2,1-1/\ell^2]$, and $(1-1/\ell^2,1)$. Then, we will show that \eqref{eq:maintheorem_target_function} holds for each of these sub-intervals. In fact, as we shall see, polarization is much faster at the tail intervals. \Pref{thm:cases} captures this approach.

\begin{proposition}
\label{thm:cases}
Let $K\in \GL(\ell,\Ftwo)$ be a kernel selected uniformly \textcolor{black}{at random from} all $\ell\times\ell$ nonsingular binary matrices. \textcolor{black}{Set $\alpha = 1/\log \ell$ and
define  $\lambda_{\alpha,K}(z)$ as in \eqref{eq:lambda_definition}.
}
Then, 
\textcolor{black}{
as $\ell$ grows, 
}
the following results hold.
\begin{enumerate}
\item[\bf 1.]
Near optimal polarization in the middle: 
\textcolor{black}{
\begin{align}\label{eq:cases_middle}
\mathbb{P} \bigg\{\lambda_{\alpha,K}(z)< c\ell^{-\frac{1}{2}}\log\ell, \hspace{0.5em}\forall z\in\Big[\frac{1}{\ell^2},1-\frac{1}{\ell^2}\Big] \bigg\} > 1-o(1), 
\end{align}}
\item[\bf 2.]
Faster polarization at the tails: 
\textcolor{black}{
\begin{align}
\label{eq:cases_tails}
\mathbb{P} \bigg\{\lambda_{\alpha,K}(z)< c\ell^{-1}\log\ell, \hspace{0.5em} \forall z\in\Big(0,\frac{1}{\ell^2}\Big)\cup\Big(1-\frac{1}{\ell^2},1\Big) \bigg\} > 1-o(1), 
\end{align}}
\end{enumerate}
where the probability spaces 
are defined over the choice of the kernel $K$
\textcolor{black}{
 and $c$ is a universal constant.}\pagebreak[3.99]  
\end{proposition}

\begin{proof}[Proof of Theorem~\ref{thm:main_theorem2}]
\textcolor{black}{
Let $A$ and $B$ be the sets of kernels such that the events in~(\ref{eq:cases_middle}) and~(\ref{eq:cases_tails}) respectively hold. Then, by \Pref{thm:cases}, we have that
\begin{equation}\label{eq:aa1}
\begin{split}
 \mathbb{P}(A) &> 1-o(1),\\
 \mathbb{P}(B) &> 1-o(1).
\end{split}
\end{equation}
Furthermore, we have that
\begin{align}\label{eq:aa2}
\mathbb{P}(A\cap B) = \mathbb{P}(A) + \mathbb{P}(B) - \mathbb{P}(A\cup B) \geq \mathbb{P}(A) + \mathbb{P}(B) -1.
\end{align}
By combining \eqref{eq:aa1} and \eqref{eq:aa2}, we conclude that 
\begin{align}\label{eq:result_of_thm_cases}
\mathbb{P} \bigg\{\lambda_{\alpha,K}(z)< c\ell^{-\frac{1}{2}}\log \ell, \hspace{0.5em}\forall z\in(0,1) \bigg\} ~>~ 1-o(1). 
\end{align}}
\vspace*{-1.80ex}
\end{proof}

In what follows, we first analyze the probability of error
under successive-cancellation decoding for the special case of 
transmission over the BEC. \textcolor{black}{We formulate the erasure probability of the $i$-th polar bit-channel (at the kernel level) as a polynomial in the erasure probability of the underlying channel. Then, we utilize this formulation to introduce and compute the average polarization behavior and provide several auxiliary lemmas/propositions to establish the \emph{sharp} transitions of $f_{K,i}(z)$ \emph{on average} that was depicted earlier in Figure~\ref{fig:large_kernels}. Eventually, we put these propositions together and prove a concentration theorem, which, in turn, completes the proof for} \Pref{thm:cases}.

\vspace*{1.80ex}
\subsection{Successive cancellation decoding on binary erasure channels}
\label{subsec:proof_BEC}

\looseness=-1
Let $K\!\in\GL(\ell,\Ftwo)$ be a nonsingular binary kernel,
and let $s$ denote the number of erasures that occurred during 
the transmission over $\BEC(z)$. 
There are a total of $\binom{\ell}{s}$ distinct and equally-likely
erasure patterns, and each of them occurs with probability
$z^s(1-z)^{\ell-s}$. \textcolor{black}{Let ${\eta}_{s}^{(i)}$ denote the number of erasure patterns with $s$ erasures, which make $u_i$ undecodable.} Thus the erasure probability of the $i$-th bit-channel is given by
\begin{align}\label{def:f_K,i(z)}
f_{K,i}(z) = \sum_{s=0}^{\ell}z^s(1-z)^{\ell-s}\textcolor{black}{{\eta}_{s}^{(i)}}
.
\end{align}
Fix an erasure pattern with $s$ erasures. 
To simplify notation in what follows, let us assume 
that the $s$ erasures are in the last $s$ positions. As in \eq{Wi-def}, 
let us write $\uuu = (\vvv,u_i,{\uuu}')$ for the vector encoded by the 
polar transformation in \eq{polar-def}, and let 
$\yyy = (y_1,y_2,\ldots,y_{\ell-s})$ 
denote the vector observed at the channel output \textcolor{black}{in which the erasure locations are removed}. Then
$
\yyy = \uuu K{|}_{\ell-s}
$
where $K{|}_{\ell-s}$ denotes the submatrix of $K$ consisting of
its first $\ell-s$ columns. Notice, however, that in addition to $\yyy$,
the successive-cancellation decoder knows the vector $\vvv$ consisting
of the first $i-1$ bits of $\uuu$. Thus let us write~
\begin{align}
\yyy 
\,=\, 
\uuu K{\bigm|}_{\ell-s}
=\, 
(\vvv,u_i,{\uuu}') K{\bigm|}_{\ell-s}
=\, 
(\vvv,0,\zero) K{\bigm|}_{\ell-s} \,+\,~
(\zero,u_i,\uuu') K{\bigm|}_{\ell-s} 
\end{align}
and define
$
\xxx = \yyy - (\vvv,0,\zero) K{|}_{\ell-s}
$.
Since this vector $\xxx$ can be computed by the decoder, it
follows that~the decoding task is to determine $u_i$ given
\begin{align}
\xxx 
\: = \,
(u_i,\uuu') K'{\bigm|}_{\ell-s} 
\end{align}
where $K'{|}_{\ell-s}$ denotes the submatrix of $K{|}_{\ell-s}$ consisting of
its last $\ell-(i-1)$ columns. It is easy to see that~$u_i$ can be determined
uniquely from $\xxx$ \textcolor{black}{if and only if}
the vector $(1,0,0,\ldots,0)^t$ is in 
the column space of the matrix $K'{|}_{\ell-s}$. Thus we arrive
at the following decodability condition:
\begin{align}
\label{eq:BEC_subkernel_step3}
u_i~ \text{is decodable} 
\:~\Longleftrightarrow~
(1,\underbrace{0,0,\dots,0}_{\ell-i})^t \in\, 
\text{column space of } K'{\bigm|}_{\ell-s}
\vspace*{-0.54ex}
\end{align}
As we shall see, it is advantageous to rephrase this condition in terms
of the column space of the $\ell \times (\ell-s)$ matrix $K{|}_{\ell-s}$,
since we know that all the columns of this matrix are linearly independent.
Clearly, $(1,0,0,\ldots,0)^t$ is in the column space of $K'{|}_{\ell-s}$ 
if and only if
\begin{align}\label{eq:BEC_subkernel_step4}
\exists\, \psi_1, \psi_2, \ldots, \psi_{i-1} \in \Ftwo: 
\hspace{3ex}
(\psi_1,\psi_2\cdots,\psi_{i-1},1,\underbrace{0,0,\cdots,0}_{\ell-i})^t 
\in ~\text{column space of } K{\bigm|}_{\ell-s}. 
\vspace{-0.54ex}
\end{align}
Now let $\textcolor{black}{{\bf e}}_j$ denote the $j$-th element of the canonical basis
for $\Ftwo^\ell$ and define the linear subspace $E_j$ of $\mathbb{F}_2^{\ell}$ 
as 
\smash{$E_j \triangleq \big\langle \textcolor{black}{{\bf e}}_1,\textcolor{black}{{\bf e}}_2,\cdots,\textcolor{black}{{\bf e}}_j \big\rangle$},
where $\langle \cdot \rangle$ denotes the linear span over $\Ftwo$.
With this, in view of \eq{eq:BEC_subkernel_step3} and 
\eq{eq:BEC_subkernel_step4}, the~decodability condition 
can be rephrased as follows:
\begin{align}
\label{eq:prelim_decodability_condition}
u_i~ \text{is decodable} 
\:~\Longleftrightarrow~
(E_i\setminus E_{i-1}) \cap \bigl(\text{column space of } K{\bigm|}_{\ell-s}\bigr) 
\:\neq\: \emptyset.
\end{align}
In what follows, we use \eqref{eq:prelim_decodability_condition} to
derive an explicit formula for the probability that $u_i$ is decodable
--- that is, for 
$
\mathbb{P}\bigl\{
(E_i\setminus E_{i-1}) \cap \bigl(\text{column space of } K{|}_{\ell-s}\bigr)
\neq \emptyset
\bigr\}
$
when $K$ is selected uniformly at random from $\GL(\ell,\Ftwo)$.

\subsection{Average polarization behavior}\label{subsec:random_kernel_analysis}

In this subsection, we study the erasure probability
of the $i$-th bit-channel $W_i$ given that (\emph{i}) the kernel is selected
uniformly at random from $\GL(\ell,\Ftwo)$, 
and (\emph{ii}) the transmission channel is BEC($z$).
Explicitly, for all $i\in [\ell]$,~we\linebreak
define the \emph{average erasure probability} 
$\cF_i(z)$ as follows:\vspace{0.90ex}
\begin{align}\label{eq:definition_randomkernel}
\cF_i(z) 
~\triangleq~
\mathbb{E}_K \bigl[f_{K,i}(z)\bigr] 
~=~
\frac{\displaystyle\sum_{K\in\GL(\ell,\Ftwo)}\hspace{-2.70ex}f_{K,i}(z)}
{\displaystyle\Strut{2.70ex}{0ex}\bigl|\GL(\ell,\Ftwo)\bigr|}
~=~
\frac{\displaystyle\sum_{K\in\GL(\ell,\Ftwo)}\hspace{-2.70ex}f_{K,i}(z)}
{\displaystyle \prod_{j=0}^{\ell-1} (2^\ell - 2^j)}
~.
\end{align}
In what follows, we analyze the asymptotic behavior of $\cF_i(z)$ and
show that, as $\ell$ grows, $\cF_i(z)$ becomes close to a step
function with a jump at $z\sim \sfrac{i}{\ell}$. \textcolor{black}{Later on, we} prove concentration results \textcolor{black}{that} show that, with high probability over the choice of the kernel, $f_{K,i}(z)$ is also close to a sharp step
function centered around $z\sim\sfrac{i}{\ell}$. \textcolor{black}{This is captured in Propositions~\ref{lemma:randomkernel_caseI_erasure_probability_lowerbound} and~\ref{lemma:randomkernel_caseI_erasure_probability_upperbound}.}

\textcolor{black}{
\begin{proposition}[Lower bound on the average erasure probability]
	\label{lemma:randomkernel_caseI_erasure_probability_lowerbound}
	Let $\cF_i(z)$ be the average erasure probability of the $i$-th
	bit-channel as defined in~\eqref{eq:definition_randomkernel}. 
	Fix $\beta,\sigma\in\mathbb{R}^+\triangleq\{x:x\in\R, x>0\}$ and assume that
	\begin{equation}
	\label{eq:hpz}
	z 
	~>~ 
	\frac{i}{\ell} + \frac{\ceil{\sigma\log \ell}}{\ell}+\bigg(\frac{\beta\ln \ell}{2\ell}\bigg)^{1/2},
	\end{equation}
	where $\log$ and\, $\ln$ denote the logarithm in base $2$ and $e$, respectively. 
	Then, we have that
	\begin{align}\label{eq:randomkernel_caseI_erasure_probability_lowerbound}
	\cF_i(z) ~>~ \bigl(1-\ell^{-\beta}\bigr)\bigl(1-\ell^{-\sigma}\bigr).
	\end{align}
\end{proposition}
\begin{proposition}[Upper bound on the average erasure probability]
	\label{lemma:randomkernel_caseI_erasure_probability_upperbound}
	Let $\cF_i(z)$ be the average erasure probability of the $i$-th bit-channel as defined in~\eqref{eq:definition_randomkernel}. 
	Fix $\beta,\sigma\in\mathbb{R}^+$ and assume that
	\begin{equation}\label{eq:hpz2}
	z
	~<~
	\frac{i}{\ell} - \frac{\textcolor{black}{h}(\sigma)}{\ell}
	\,-\,
	\bigg(\frac{\beta\ln \ell}{2\ell}\bigg)^{1/2},
	\end{equation}
	where $\log$ and\, $\ln$ denote the logarithms in base $2$ and $e$, 
	respectively, and 
	\begin{align}
	\label{eq:randomkernel_caseI_erasure_probability_upperbound_g(delta)}
	\textcolor{black}{h}(\sigma) 
	~=\,
	\left\lfloor \frac{\sigma\log \ell + \log 6}{\log 3 -1} \right\rfloor 
	\,=~
	O(\sigma\log \ell).
	\end{align}
	Then, we have that
	\begin{align}\label{eq:randomkernel_caseI_erasure_probability_upperbound}
	\cF_i(z) < \ell^{-\beta}+\ell^{-\sigma}.
	\end{align}
\end{proposition}
}

\textcolor{black}{Recall} that $\cF_i(z)$ is the probability of observing an erasure
at the $i$-th bit-channel, when there are two sources of randomness:
(\emph{i}) the selection of the kernel, and (\emph{ii}) the number 
and location of the erased bits. Let the random variable $S$
denote the number of erased bits at the receiver. As $z$ is the
erasure probability of the underlying transmission channel, we 
have that
\begin{align}
\label{eq:randomkernel_definition_S}
\P\{ S = s \} ~=~ \binom{\ell}{s}z^s(1-z)^{\ell-s}. 
\end{align}
\looseness=-1
Since we also average over all $\ell\times\ell$ nonsingular kernels,
the location of these $s$ erasures does not affect the average erasure
probability. 
Hence, without loss of generality, we can assume that the erasures 
are in the last $s$ positions. Let
$\mathcal{R}_{\ell-s}\subset \mathbb{F}_2^{\ell}$ denote the linear
span of the first $\ell-s$ columns of the kernel. Since the kernel is
selected uniformly at random from $\GL(\ell,\Ftwo)$, it is easy to see that
$\mathcal{R}_{\ell-s}$ is also chosen uniformly at random from all
subspaces of dimension $\ell-s$ in $\mathbb{F}_2^{\ell}$. Recalling
the decodability condition
(\ref{eq:prelim_decodability_condition}), we have that
\begin{align}
\label{eq:randomkernel_bitchannel_erasureprobability_simplified}
\mathbb{P} \{u_i = \text{erasure} \,|\, S=s \} 
~=~
\P \bigl\{ \mathcal{R}_{\ell-s} \cap (E_i\setminus E_{i-1}) = \emptyset\bigr\},
\end{align}
where $\mathcal{R}_{\ell-s}$ is a subspace of dimension $\ell-s$ in $\mathbb{F}_2^{\ell}$ that is chosen uniformly at random. Note that the \emph{event} on the Left Hand Side (LHS) is reliant on a specific number of erasures, $s$, \textcolor{black}{and} is computed over all possible \textcolor{black}{locations of erasures} and selections of $K$. However, the \emph{event} on the RHS is independent of the location and number of erasures, and thus is computed over all selections of random subspace $\mathcal{R}_{\ell-s}$. Therefore, the probability that \textcolor{black}{the} $i$-th bit is erased given $s$ erasures is a claim solely on the structure of the kernel. Now, we can rewrite $\cF_i(z)$ as
\begin{align}
\label{eq:F_i_formula}
\cF_i(z) 
~=~
\sum_{s=0}^{\ell} \P\{S = s\}\mathbb{P} \{u_i = \text{erasure} \,|\,S=s \}
~=~
\sum_{s=0}^{\ell} \binom{\ell}{s}z^s(1-z)^{\ell-s}p_{i\mid s}\hspace{1ex},
\end{align}
where we define the \emph{average conditional erasure probability} $p_{i|s}$ 
as follows:
\begin{align}\label{eq:defacep}
p_{i|s} 
~\triangleq~ 
\mathbb{P} \{u_i = \text{erasure} \,|\, S=s \}\hspace{1ex} 
\textcolor{black}{=}
\hspace{1.2ex}
\P \{ \mathcal{R}_{\ell-s} \cap (E_i\setminus E_{i-1}) = \emptyset\},
\end{align}
\textcolor{black}{where the right-most equality is derived from~(\ref{eq:randomkernel_bitchannel_erasureprobability_simplified}).}
\begin{lemma}[Closed-form for the average conditional erasure probability]
\label{lemma:randomkernel_conditional_erasure_probability}
Let $p_{i|s}$ be the average conditional erasure probability defined in  \eqref{eq:defacep}. Then, for any $i$ and $s$, we have
\begin{align}\label{eq:randomkernel_conditional_erasure_probability}
p_{i|s}
~=~
\G{\ell}{\ell-s}^{-1}\hspace{2ex}\sum_{t=\max\{i-s,0\}}^{\min \{\ell-s ,i-1\}}
\G{i-1}{t} \prod_{j=0}^{\ell-s-t-1}\frac{2^{\ell}-2^{i+j}}{2^{\ell-s}-2^{t+j}}\hspace{1ex},
\end{align}
\textcolor{black}{where {\small{$\begin{bmatrix}a \\ b\end{bmatrix}$}} is the binary Gaussian binomial coefficient that denotes the total number of subspaces with \linebreak dimension $b$ in $\mathbb{F}_2^{a}$.}
\end{lemma}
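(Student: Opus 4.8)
The plan is to compute $p_{i|s} = \P\{\mathcal{R}_{\ell-s} \cap (E_i \setminus E_{i-1}) = \emptyset\}$ by conditioning on the dimension of the intersection $\mathcal{R}_{\ell-s} \cap E_{i-1}$. Write $r = \ell - s$ for the dimension of the uniformly random subspace $\mathcal{R}_r \subseteq \Ftwo^\ell$. The key observation is that $\mathcal{R}_r$ misses $E_i \setminus E_{i-1}$ exactly when $\mathcal{R}_r \cap E_i = \mathcal{R}_r \cap E_{i-1}$, i.e.\ when adding the coordinate $e_i$ does not enlarge the intersection with $E_{i-1}$. So first I would set $t = \dim(\mathcal{R}_r \cap E_{i-1})$ and split the event according to the value of $t$. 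The constraints $\max\{i-s,0\} \le t \le \min\{\ell-s, i-1\}$ appearing in the sum are exactly the feasibility bounds: $t \le i-1$ since $E_{i-1}$ has dimension $i-1$, $t \le r = \ell-s$ since $\mathcal{R}_r$ has dimension $r$, $t \ge 0$ trivially, and $t \ge r - (\ell - (i-1)) = i-1-s$ by a dimension count ($\mathcal{R}_r$ and $E_{i-1}$ live in $\Ftwo^\ell$, so their intersection has dimension at least $\dim \mathcal{R}_r + \dim E_{i-1} - \ell$).

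Next I would compute the two factors in each summand. The probability that a uniformly random $r$-dimensional subspace meets a fixed $(i-1)$-dimensional subspace in exactly dimension $t$ is a standard Gaussian-binomial identity; it can be derived by counting, among all $\G{\ell}{r}$ subspaces of dimension $r$, how many intersect $E_{i-1}$ in each fixed $t$-dimensional subspace of $E_{i-1}$ (there are $\G{i-1}{t}$ such subspaces) and with what multiplicity. Concretely, conditioning on $\mathcal{R}_r \cap E_{i-1} = U$ for a fixed $t$-dimensional $U \subseteq E_{i-1}$, I would count the number of ways to extend $U$ to an $r$-dimensional subspace $\mathcal{R}_r$ whose intersection with $E_{i-1}$ is exactly $U$: each new basis vector must be chosen outside $E_{i-1} + (\text{span so far})$, which gives the product $\prod_{j=0}^{\ell-s-t-1} (2^\ell - 2^{i-1+j})$ after accounting for the $2^{t+j}$ scalings that give the same subspace — this is where the ratio $\frac{2^\ell - 2^{i+j}}{2^{\ell-s} - 2^{t+j}}$ in the statement will emerge (note $2^{i-1} \cdot 2 = 2^i$, handling the index shift, and one should double-check whether the product range should be read with $i$ versus $i-1$; the stated form uses $2^{\ell} - 2^{i+j}$, consistent with one of the $r-t$ extension steps being ``free'' because $e_i$ must land in the span). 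Combined with the factor $\G{\ell}{r}^{-1}$ and $\G{i-1}{t}$ this yields the claimed closed form, provided the event $\mathcal{R}_r \cap (E_i \setminus E_{i-1}) = \emptyset$ is already equivalent to ``$\mathcal{R}_r \cap E_{i-1}$ has dimension $t$ and $\mathcal{R}_r \cap E_i$ also has dimension $t$,'' which I would verify by noting that the extra steps counted must avoid $E_i$, not merely $E_{i-1}$.

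The main obstacle, and the step requiring the most care, is the bookkeeping in this last counting argument: tracking exactly which of the $r - t$ extension vectors is forced into $E_i$ (versus forced to avoid $E_i$), getting the powers of $2$ and the index shifts right, and confirming that the resulting product telescopes into the stated ratio $\prod_{j=0}^{\ell-s-t-1} \frac{2^\ell - 2^{i+j}}{2^{\ell-s} - 2^{t+j}}$ rather than some neighboring expression. A clean way to organize this is to first prove the auxiliary identity for $\P\{\dim(\mathcal{R}_r \cap V) = t\}$ for a \emph{fixed} subspace $V$ of dimension $d$ (a known formula: $\G{d}{t}\G{\ell-d}{r-t} 2^{(d-t)(r-t)} / \G{\ell}{r}$), then handle the ``$E_i$ versus $E_{i-1}$'' refinement as a conditional probability: given $\dim(\mathcal{R}_r \cap E_{i-1}) = t$, the probability that $e_i$ does not enter the intersection with $E_i$. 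Once both pieces are in hand, multiplying them and simplifying the Gaussian binomials against the $2$-power terms should reproduce \eqref{eq:randomkernel_conditional_erasure_probability}. I would reconcile the final algebraic form against the $i=1$ case ($p_{1|s}$ should equal the probability a random $(\ell-s)$-subspace misses the nonzero vector $e_1$, namely $(2^\ell - 2^{\ell-s})/(2^\ell - 1) = 1 - \G{\ell-1}{\ell-s}/\G{\ell}{\ell-s}$ appropriately rewritten) as a sanity check before concluding.
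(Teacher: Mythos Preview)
Your plan is essentially the paper's own argument: condition on $t=\dim(\cR_{\ell-s}\cap E_{i-1})$, observe that the event $\cR_{\ell-s}\cap(E_i\setminus E_{i-1})=\emptyset$ is equivalent to $\dim(\cR_{\ell-s}\cap E_i)=t$ as well, derive the feasibility range for $t$, and count such subspaces by first choosing a $t$-dimensional $U\subseteq E_{i-1}$ (the factor $\G{i-1}{t}$) and then counting basis extensions of $U$ to an $(\ell-s)$-dimensional subspace that stay out of $E_i$ (the product), normalized by $\G{\ell}{\ell-s}$.

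One correction to your bookkeeping: the parenthetical remark that ``$e_i$ must land in the span'' is backwards --- the whole point is that $e_i$ (more precisely, every element of $E_i\setminus E_{i-1}$) must \emph{not} lie in $\cR_{\ell-s}$. The reason the numerator reads $2^\ell-2^{i+j}$ rather than $2^\ell-2^{i-1+j}$ is simply that each extension vector must avoid $E_i+(\text{span so far})$, and once the current span has dimension $t+j$ with $\dim((\text{span})\cap E_i)=t$, that sum has dimension $i+(t+j)-t=i+j$. You caught this yourself a few lines later (``the extra steps counted must avoid $E_i$, not merely $E_{i-1}$''); just drop the earlier mention of $E_{i-1}$ and the ``free step'' remark, and the direct count goes through without needing the detour through the auxiliary formula $\G{d}{t}\G{\ell-d}{r-t}2^{(d-t)(r-t)}/\G{\ell}{r}$.
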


\begin{proof}
\label{lemma:randomkernel_conditional_erasure_probability_extended}
Let $\Delta_{\ell-s}$ denote the number of subspaces of dimension $\ell-s$
in $\mathbb{F}_2^{\ell}$. That is,
\begin{align}\label{eq:randomkernel_conditional_erasure_delta_definition}
\Delta_{\ell-s}
~\triangleq~ 
\G{\ell}{\ell-s}
~=~
\prod_{j=0}^{\ell-s-1}\frac{2^{\ell}-2^j}{2^{\ell-s}-2^j}.
\end{align}
Define $\Gamma(t;\ell,s,i)$ as the number of subspaces $A$ of
dimension $\ell-s$ in $\mathbb{F}_2^{\ell}$ such that $A \cap
(E_i\setminus E_{i-1}) = \emptyset$ and $\dim(A\cap E_{i-1}) =
t$. \textcolor{black}{Recall that $E_i$ and $E_{i-1}$ are linear subspaces of $\Ftwo^{\ell}$ with respective dimensions of $i$ and $i-1$, and $E_{i-1}\subset E_i$. Therefore, $\Gamma(t;\ell,s,i)$ is equal to} the number of subspaces $A$ of dimension $\ell-s$ in $\mathbb{F}_2^{\ell}$ such~that
$\dim(A\cap E_{i-1}) = \dim(A\cap E_{i}) = t$. Consequently, the
integer $t$ in the definition of $\Gamma(t;\ell,s,i)$ satisfies~
\begin{align}
\label{eq:randomkernel_conditional_erasure_t_definition}
\max\{i-s,0\}  \,\leq\,  t \,\leq\, \min \{\ell-s ,i-1\}.
\end{align}
A simple basis counting argument 
(see, for example, \cite[Section\,II.C]{Vardy94}) yields that 
\begin{align}
\label{eq:randomkernel_conditional_erasure_gamma_definition}
\Gamma(t;\ell,s,i)
~=~ 
\G{i-1}{t} \prod_{j=0}^{\ell-s-t-1}\frac{2^{\ell}-2^{i+j}}{2^{\ell-s}-2^{t+j}}
\end{align}
where the first term in \eq{eq:randomkernel_conditional_erasure_gamma_definition}
counts the number of subspace of dimension $t$ in $E_{i-1}$ whereas
the second term counts the (normalized) number of basis extensions
from dimension $t$ to dimension $\ell-s$.
Enumerating over~all possible values of $t$ given by 
\eq{eq:randomkernel_conditional_erasure_t_definition},
the desired conditional erasure probability can be written as
\begin{align}\label{eq:randomkernel_conditional_erasure_probability_parameters}
p_{i|s} ~=~ \frac{\displaystyle\sum_{t=\max\{i-s,0\}}^{\min\{\ell-s,i-1\}}
\hspace{-2.10ex}\Gamma(t;\ell,s,i)}{\Delta_{\ell-s}}
~=~
\G{\ell}{\ell-s}^{-1}\hspace{2ex}\sum_{t=\max\{i-s,0\}}^{\min \{\ell-s ,i-1\}}
\G{i-1}{t} \prod_{j=0}^{\ell-s-t-1}\frac{2^{\ell}-2^{i+j}}{2^{\ell-s}-2^{t+j}}
~.
\end{align}
\end{proof}

Next, we use this closed-form expression to provide upper and lower 
bounds on the average conditional erasure probability $p_{i|s}$ and 
on the average erasure probability $\cF_i(z)$.

\begin{lemma}[Lower bound on the average conditional erasure probability]
\label{lemma:randomkernel_caseI_conditional_erasure_probability_lowerbound}
Let $p_{i|s}$ be the average conditional erasure probability defined 
in \eqref{eq:defacep}. Then, for any $i$ and $s$, we have
\begin{align}\label{eq:randomkernel_caseI_erasure_probability_lowerbound_step1}
p_{i|s} ~\geq~ 1-2^{-(s-i)}.
\end{align}
\end{lemma}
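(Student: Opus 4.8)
The plan is to bound the complementary probability $\P\{\mathcal{R}_{\ell-s}\cap(E_i\setminus E_{i-1})\neq\emptyset\}$ by a first-moment (union bound) argument, exploiting the fact that $E_i\setminus E_{i-1}$ is a \emph{small} set. Indeed, $E_i\setminus E_{i-1}$ consists of exactly those vectors of $E_i$ whose $i$-th coordinate equals $1$, so it has $|E_i|-|E_{i-1}|=2^i-2^{i-1}=2^{i-1}$ elements. It therefore suffices to control, for a single fixed nonzero vector, the probability that it is contained in the random subspace $\mathcal{R}_{\ell-s}$, and then sum over the $2^{i-1}$ vectors of $E_i\setminus E_{i-1}$.

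First I would record the per-vector incidence probability. Fix any nonzero $v\in\mathbb{F}_2^\ell$ and write $d=\ell-s$. Since $\GL(\ell,\Ftwo)$ acts transitively both on the nonzero vectors of $\mathbb{F}_2^\ell$ and on the $d$-dimensional subspaces, every nonzero vector is equally likely to lie in $\mathcal{R}_{\ell-s}$; as a $d$-dimensional subspace contains $2^d-1$ nonzero vectors out of a total of $2^\ell-1$, this common probability equals $(2^d-1)/(2^\ell-1)$. (This ratio is $\G{\ell-1}{d-1}/\G{\ell}{d}$; one could also read it off the closed form in Lemma~\ref{lemma:randomkernel_conditional_erasure_probability}, but the symmetry argument is shorter.)

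Next I would apply the union bound over the $2^{i-1}$ vectors of $E_i\setminus E_{i-1}$ and simplify:
\begin{equation*}
\P\bigl\{\mathcal{R}_{\ell-s}\cap(E_i\setminus E_{i-1})\neq\emptyset\bigr\}
\ \le\ 2^{i-1}\cdot\frac{2^{\ell-s}-1}{2^{\ell}-1}
\ \le\ 2^{i-1}\cdot 2^{-s}
\ \le\ 2^{-(s-i)},
\end{equation*}
where the middle step uses the elementary inequality $(2^{\ell-s}-1)\,2^{\ell}\le 2^{\ell-s}\,(2^{\ell}-1)$, valid for every $s\ge 0$. Taking complements gives $p_{i|s}=1-\P\{\mathcal{R}_{\ell-s}\cap(E_i\setminus E_{i-1})\neq\emptyset\}\ge 1-2^{-(s-i)}$, which is exactly the claim; note it is vacuous unless $s>i$.

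The argument is essentially a one-line union bound, so I do not expect a real obstacle. The only step requiring a modicum of care is the justification of the per-vector incidence probability $(2^d-1)/(2^\ell-1)$; the transitivity/symmetry argument settles it cleanly, and if one preferred to avoid even that, the displayed closed form for $p_{i|s}$ from the preceding lemma could be invoked instead, at the cost of a heavier computation.
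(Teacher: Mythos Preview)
Your proof is correct, but it takes a genuinely different route from the paper's.

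The paper does not use a union bound over the $2^{i-1}$ vectors of $E_i\setminus E_{i-1}$. Instead it invokes the closed-form expression for $p_{i|s}$ established in the preceding lemma, keeps only the term with $t=0$, and bounds the resulting product:
\[
p_{i|s}\ \ge\ \frac{\Gamma(0;\ell,s,i)}{\Delta_{\ell-s}}
\ =\ \prod_{j=0}^{\ell-s-1}\frac{2^\ell-2^{i+j}}{2^\ell-2^j}
\ >\ \prod_{j=0}^{\ell-s-1}\bigl(1-2^{-(\ell-i)+j}\bigr)
\ \ge\ 1-\sum_{j=0}^{\ell-s-1}2^{-(\ell-i)+j}
\ >\ 1-2^{-(s-i)}.
\]
Your argument is more elementary and self-contained: it bypasses the Gaussian-binomial machinery entirely, needing only the symmetry of the uniform distribution on $(\ell-s)$-dimensional subspaces and a first-moment bound. (In fact your bound is a factor of~$2$ sharper than stated, since $2^{i-1}\cdot 2^{-s}=2^{-(s-i+1)}$.) The paper's approach has the advantage of reusing the closed form that also drives the matching \emph{upper} bound in the next lemma, keeping the exposition uniform; yours is conceptually cleaner for this particular inequality and would stand on its own even without the closed form.
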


\begin{proof}
If $i\ge s$, then the lemma holds vacuously. 
Henceforth, let us assume that $i< s$. 
We drop all but the first term 
from~(\ref{eq:randomkernel_conditional_erasure_probability_parameters}) to write
\begin{align}
p_{i|s} 
~=~
\frac{\displaystyle
\sum_{t=0}^{\min\{\ell-s,i-1\}}\hspace{-2.70ex}\Gamma(t,\ell,s,i)}{\Delta_{\ell-s}}
~\geq~ \frac{\Gamma(0;\ell,s,i)}{\Delta_{\ell-s}}
~=~
\frac
{\displaystyle\prod_{j=0}^{\ell-s-1}\frac{2^\ell - 2^{i+j}}{2^{\ell-s}-2^j}}
{\displaystyle\prod_{j=0}^{\ell-s-1}\frac{2^\ell - 2^{j}}{2^{\ell-s}-2^j}} 
~=~
\prod_{j=0}^{\ell-s-1}\frac{2^\ell - 2^{i+j}}{2^{\ell}-2^{j}}.
\end{align}
The proof now reduces to the following calculation:
\begin{align}
\prod_{j=0}^{\ell-s-1}\frac{2^\ell - 2^{i+j}}{2^{\ell}-2^{j}} > \prod_{j=0}^{\ell-s-1}\frac{2^\ell - 2^{i+j}}{2^{\ell}} = \prod_{j=0}^{\ell-s-1}\left( 1-2^{-(\ell-i) +j} \right)
\textcolor{black}{\stackrel{\mathclap{\mbox{\footnotesize(a)}}}{\geq}}
1 - \sum_{j=0}^{\ell-s-1} 2^{-(\ell-i)+j} 
\textcolor{black}{\stackrel{\mathclap{\mbox{\footnotesize(b)}}}{>}}
 1-2^{-(s-i)},
\end{align}
\textcolor{black}{where (a) is because of 
\begin{align}
\prod_{i=1}^n (1-x_i) \geq 1-\sum_{i=1}^n x_i \hspace{1.5ex}\text{ for all }\hspace{1.5ex} \textcolor{black}{0\leq x_i\leq 1},
\end{align}
\textcolor{black}{which can be shown by induction}, and (b) is due to the fact that for all $n\in\N$, we have $2^{-n} = \sum_{i=n+1}^{\infty} 2^{-i}$. }
\end{proof}

\begin{proof}[Proof of Proposition~\ref{lemma:randomkernel_caseI_erasure_probability_lowerbound}]
We begin by dropping the first \textcolor{black}{$i+\ceil{\textcolor{black}{\sigma} \log (\ell)}+1$ terms} in~(\ref{eq:F_i_formula}) and applying 
\Lref{lemma:randomkernel_caseI_conditional_erasure_probability_lowerbound}
to obtain
\begin{align}
\label{eq:randomkernel_caseI_erasure_probability_lowerbound_step2}
\begin{split}
\cF_i(z) 
~=~ \sum_{s=0}^{\ell} \binom{\ell}{s}z^s(1-z)^{\ell-s}p_{i|s}\hspace{2ex}&> \sum_{s=i+\ceil{\textcolor{black}{\sigma}\log \ell}+1}^{\ell} \binom{\ell}{s}z^s(1-z)^{\ell-s}(1-2^{-(s-i)})\\
&\geq (1-\ell^{-\textcolor{black}{\sigma}})\sum_{s=i+\textcolor{black}{\sigma}\ceil{\log \ell}+1}^{\ell} \binom{\ell}{s}z^s(1-z)^{\ell-s}.
\end{split}
\end{align}
Now, we point out that the sum on the RHS of
\eqref{eq:randomkernel_caseI_erasure_probability_lowerbound_step2} is
the tail probability of a binomial distribution with $\ell$ trials and
a success rate of $z$. More formally, \textcolor{black}{$X\sim B(\ell,z)$ is a binomial random variable with $\ell$ trials and success probability $z$}. Then, from
\eqref{eq:randomkernel_caseI_erasure_probability_lowerbound_step2} we
immediately obtain that
\begin{align}
\cF_i(z) 
\,>\, 
\bigl(1-\ell^{-\textcolor{black}{\sigma}}\bigr)\, \P \{X> i + \ceil{\textcolor{black}{\sigma}\log \ell} \}.
\end{align}
\textcolor{black}{Now, we invoke Hoeffding's inequality~\cite{Hoef63} for a sequence of of i.i.d. Bernoulli random variables with success probability $1-z$ and $\ell$ trials, which states that for any $\nu>0$, 
\begin{align}
\mathbb{P}\big(X\geq (1-z+\nu)\ell\big) \leq \exp (-2\nu^2\ell).
\end{align} 
By replacing the value of $\nu$ with the expressions from~(\ref{lemma:randomkernel_caseI_erasure_probability_lowerbound}), we have}
\begin{align}\label{eq:randomkernel_caseI_erasure_probability_lowerbound_step3}
\begin{split}
\mathbb{P} \{X> i + \ceil{\textcolor{black}{\sigma}\log \ell}\} 
&= 1- \mathbb{P} \{X\leq i + \ceil{\textcolor{black}{\sigma}\log \ell}\}\\[0.90ex]
&\stackrel{\mathclap{\mbox{\footnotesize(a)}}}{\ge} 1- \exp\left(-2\frac{\big(z\ell - (i+\ceil{\textcolor{black}{\sigma}\log \ell})\big)^2}{\ell}\right)\\
&\stackrel{\mathclap{\mbox{\footnotesize(b)}}}{\ge} 1-\ell^{-\beta},
\end{split}
\end{align}
where in (a) we have used Hoeffding's inequality and 
in (b) we have used \eqref{eq:hpz}. The lemma now readily~follows
by combining
\eqref{eq:randomkernel_caseI_erasure_probability_lowerbound_step2}
and
\eqref{eq:randomkernel_caseI_erasure_probability_lowerbound_step3}.
\end{proof}

Next, we use the closed-form expression in 
\Lref{lemma:randomkernel_conditional_erasure_probability}
in order to derive a lower bound on the average conditional 
erasure probability and on the average erasure probability.

\begin{lemma}[Upper bound on the average conditional erasure probability]
\label{lemma:randomkernel_caseI_conditional_erasure_probability_upperbound}
Let $p_{i|s}$ be the average conditional erasure probability defined in  \eqref{eq:defacep}. Then, for any $i$ and $s$,
\begin{align}\label{eq:randomkernel_caseI_erasure_probability_upperbound_step3}
p_{i|s} ~\leq~ 2 \left( \frac{2}{3}\right)^{i-s-1}.
\end{align}
\end{lemma}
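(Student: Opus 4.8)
The plan is to bound $p_{i|s}$ by keeping only the largest term in the closed-form sum from \Lref{lemma:randomkernel_conditional_erasure_probability}, or rather by bounding the full sum by a geometric series. Recall that
\[
p_{i|s} ~=~ \G{\ell}{\ell-s}^{-1}\sum_{t=\max\{i-s,0\}}^{\min\{\ell-s,i-1\}}
\G{i-1}{t}\prod_{j=0}^{\ell-s-t-1}\frac{2^\ell-2^{i+j}}{2^{\ell-s}-2^{t+j}}.
\]
First I would observe that the claimed bound $p_{i|s}\le 2(2/3)^{i-s-1}$ is only meaningful when $i-s-1$ is positive and not too large; when $i\le s$ the right-hand side is $\ge 2$ and the bound holds trivially since $p_{i|s}\le 1$ (indeed $2(2/3)^{i-s-1}\ge 2(2/3)^{-1}=3>1$ whenever $i-s-1\le -1$). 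So the work is in the regime $i>s$, i.e.\ when few columns are erased relative to the bit index.

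The key step is to compare consecutive terms of the sum (as a function of $t$) and show the summand decays geometrically with ratio at most, say, $2/3$, as $t$ decreases from its maximum value $t_{\max}=\min\{\ell-s,i-1\}$. Writing $T(t)$ for the $t$-th summand, I would compute the ratio $T(t)/T(t+1)$: the binomial factor contributes $\G{i-1}{t}/\G{i-1}{t+1}=(t+1)/(i-1-t)$, and the product factor contributes a ratio of the form $(2^{\ell-s}-2^{t+j})$-type terms against $(2^\ell-2^{i+j})$-type terms — one extra factor appears in the numerator product when $t$ decreases. Carefully tracking which factors appear, the product-ratio should be roughly $(2^{\ell-s}-2^{t})/(2^{\ell}-2^{i+\ell-s-t-1})$, which is exponentially small in the relevant regime. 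Combining, the ratio $T(t)/T(t+1)$ is at most something like $1/2$ for all $t<t_{\max}$, so $\sum_t T(t)\le 2\,T(t_{\max})$. Then one evaluates $T(t_{\max})/\G{\ell}{\ell-s}$ and shows it is at most $(2/3)^{i-s-1}$; at $t=t_{\max}=i-1$ (assuming $i-1\le\ell-s$) the binomial factor is $\G{i-1}{i-1}=1$ and the product has $\ell-s-(i-1)$ factors each of the form $\frac{2^\ell-2^{i+j}}{2^{\ell-s}-2^{i-1+j}}$; dividing by $\G{\ell}{\ell-s}$ and massaging the Gaussian-binomial product into a telescoping-like quotient should yield a bound of the form $\prod(1-2^{-c})^{\pm1}$ that one can collapse to a clean geometric bound in $i-s$.

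The main obstacle I expect is the bookkeeping in the product-of-products ratio: the Gaussian binomial $\G{\ell}{\ell-s}$ in the denominator and the two $j$-indexed products interleave in a way that is easy to mis-index, and getting the exponent exactly $i-s-1$ with constant $2/3$ (rather than some messier constant) requires choosing the comparison cleanly — perhaps it is easier to bound $\G{\ell}{\ell-s}^{-1}\G{i-1}{t}\prod(\cdots)$ directly term-by-term using $\frac{2^\ell-2^{i+j}}{2^{\ell-s}-2^{t+j}}\le 2^{s}\cdot\frac{1-2^{i+j-\ell}}{1-2^{t+j-\ell+s}}$ and pairing these $2^s$ factors against the $2^{-s}$-type decay coming from $\G{\ell}{\ell-s}^{-1}$, leaving a residual product that is manifestly bounded by a geometric factor in $i-s$. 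Once the term-ratio bound and the leading-term estimate are in hand, the lemma follows immediately by summing the geometric series and absorbing the leftover constant into the factor of $2$.
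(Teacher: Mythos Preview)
Your overall strategy --- control the sum over $t$ by a geometric series, then bound the dominant term --- matches the paper's. But two things go wrong.

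First, you have the direction of decay in $t$ reversed. The paper computes $\Gamma(t;\ell,s,i)/\Gamma(t-1;\ell,s,i)$ and shows it is at most $1/2$ for $t>i-s$; that is, the summand \emph{decreases} as $t$ increases, so the dominant term sits at $t_{\min}=i-s$, not at $t_{\max}=i-1$. (Relatedly, your ratio of Gaussian binomials is the ordinary-binomial ratio; for $q=2$ one has $\G{i-1}{t}\big/\G{i-1}{t+1}=(2^{t+1}-1)/(2^{i-1-t}-1)$, not $(t+1)/(i-1-t)$.) With the correct direction the first step gives $p_{i|s}\le 2\,\Xi_s$ where $\Xi_s \triangleq \Gamma(i-s;\ell,s,i)/\Delta_{\ell-s}$.

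Second --- and this is the idea you are missing --- the paper does \emph{not} try to evaluate $\Xi_s$ directly. Instead it runs a \emph{second} ratio argument, this time in the variable $s$ with $i$ and $\ell$ held fixed: one computes $\Xi_s/\Xi_{s+1}$ and shows it is at most $2/3$. Since $\Xi_{i-1}\le 1$ trivially (it is a single term in a probability), iterating gives $\Xi_s\le (2/3)^{i-s-1}$, and combining with the first step yields the lemma. Your plan to attack $T(t_{\max})/\G{\ell}{\ell-s}$ head-on by ``massaging the Gaussian-binomial product into a telescoping-like quotient'' is exactly the bookkeeping morass the $s$-recursion sidesteps; the clean constant $2/3$ and the exponent $i-s-1$ fall out of $\Xi_s/\Xi_{s+1}$ with almost no effort.
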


\begin{proof}
If $s\ge i-1$, the bound holds vacuously. 
Henceforth, let us assume that $s< i-1$. We start by proving that the
term with $t= i-s$ is the dominant one in the expression
\eqref{eq:randomkernel_conditional_erasure_probability_parameters} for
$p_{i|s}$. For all $t> i-s$, we have that
\begin{align}
\frac{\Gamma(t;\ell,s,i)}{\Gamma(t-1;\ell,s,i)} 
&= \frac{\begin{bmatrix}i-1\\t\end{bmatrix}}{\begin{bmatrix}i-1\\t-1\end{bmatrix}}\times
\bigg({\prod_{j=0}^{\ell-s-t-1} \frac{2^\ell -2^{i+j} }{2^{\ell-s}-2^{t+j}}}\bigg)
\bigg/
\bigg({\prod_{j=0}^{\ell-s-t} \frac{2^\ell - 2^{i+j}}{2^{\ell-s}-2^{t+j-1}}}\bigg),
\end{align}
which using a straightforward manipulation can be simplified as
\begin{align}
\frac{(2^{i-1}-2^{t-1})(2^{\ell-s}-2^{t-1})}{2^{t-1}(2^t-1)(2^{\ell}-2^{i+\ell-s-t})}
\leq \frac{1}{2^{t-1}} \cdot \frac{2^{i-1}\cdot 2^{\ell-s}}{2^{t-1}\cdot 2^{\ell-1}}
=\frac{ 2^{i-s-t+1}}{2^{t-1}} \leq 2^{-t+1}\leq \frac{1}{2}.
\end{align}
Therefore, for any $t> i-s$, we have that
\begin{align}
\Gamma(t;\ell,s,i) 
~\leq~ 
2^{-(t-(i-s))}\hspace{0.5ex}\Gamma(i-s;\ell,s,i),
\end{align}
which implies that
\begin{align}
\label{eq:randomkernel_caseI_erasure_probability_upperbound_step1_c}
p_{\textcolor{black}{i|s}} 
~\leq~
\frac{\Gamma(i-s;\ell,s,i)}{\Delta_{\ell-s}}
\Big(1 + 2^{-1} + 2^{-2} + \cdots \Big) 
~\leq~
\frac{2\Gamma(i-s;\ell,s,i)}{\Delta_{\ell-s}}.
\end{align}
In a similar fashion, we fix $\ell$ and $i$, and study the
exponential decay of the dominant term in $p_{i|s}$, denoted by 
$\xi_s \triangleq \Gamma(i-s;\ell,s,i)/\Delta_{\ell-s}$, 
as $s$ decreases. We again use straightforward manipulation to obtain
\begin{align}\label{eq:randomkernel_caseI_erasure_probability_upperbound_step2_b}
\begin{split}
\frac{\xi_{s}}{\xi_{s+1}} 
&= \frac{\Delta_{\ell-s-1}}{\Delta_{\ell-s}}\times\frac{\begin{bmatrix}i-1\\i-s\end{bmatrix}}{\begin{bmatrix}i-1\\i-s-1\end{bmatrix}}\times
\bigg(\prod_{j=0}^{\ell-i-1} \frac{2^{\ell}-2^{i+j}}{2^{\ell-s}-2^{i-s+j}}  \bigg)
\bigg/
\bigg(\prod_{j=0}^{\ell-i-1} \frac{2^{\ell}-2^{i+j}}{2^{\ell-s-1}-2^{i-s-1+j}}  \bigg)\\
&=\frac{(2^{i-1}-2^{i-s-1})(2^{\ell-s}-1)}{(2^{i-s}-1)(2^{\ell}-2^{\ell-s-1})} = \bigg(\frac{2^s-1}{2^{s+1}-1}\bigg)\frac{1-2^{-(\ell-s)}}{1-2^{-(i-s)}}\\
&\leq \frac{1}{2} \times \frac{1}{1-2^{-(i-s)}}\leq \frac{1}{2}\times\frac{1}{1-1/4} = \frac{2}{3}.
\end{split}
\end{align}
As a result, we conclude that, for any $s<i-1$,
\begin{align}\label{eq:randomkernel_caseI_erasure_probability_upperbound_step2_c}
p_{i|s}
\hspace{1ex} \overset{(\ref{eq:randomkernel_caseI_erasure_probability_upperbound_step1_c})}{\leq}\hspace{1ex} 
\frac{2\Gamma(i-s;\ell,s,i)}{\Delta_{\ell-s}} 
\hspace{1ex}\overset{(\ref{eq:randomkernel_caseI_erasure_probability_upperbound_step2_b})}{\leq}\hspace{1ex}
2\xi_{i-1} \left(\frac{2}{3}\right)^{i-s-1} \leq 2\left(\frac{2}{3}\right)^{i-s-1},
\end{align} 
\textcolor{black}{
where the last inequality follows from the fact that $\Gamma(i-s;\ell,s,i)$ is the number of subspaces of dimension $\ell-s$ in $\Ftwo^{\ell}$ with some additional properties, while $\Delta_{\ell-s}$ denotes the total nummber of such subspaces, and thus, $\xi_s \leq 1$ for all $s$ that $\xi_s$ is well defined. 
}
\end{proof}

\begin{proof}[Proof of Proposition~\ref{lemma:randomkernel_caseI_erasure_probability_upperbound}]
Let us recall the formulation of $\cF_i(z)$ from~(\ref{eq:F_i_formula}) and split the summation into two parts, where a trivial upper bound is applied to each part: we drop $\binom{\ell}{s}z^s(1-z)^{\ell-s}$ for all terms in the summation with $s\leq i-\textcolor{black}{h}(\textcolor{black}{\sigma}) -1$, and we drop $p_{i|s}$ from the remaining terms that correspond to $s\geq i-\textcolor{black}{h}(\textcolor{black}{\sigma})$. More formally, we have
\begin{align}\label{eq:randomkernel_caseI_erasure_probability_upperbound_step3_a}
\begin{split}
\cF_i(z) &= \sum_{s=0}^{i-\textcolor{black}{h}(\textcolor{black}{\sigma})-1} \binom{\ell}{s}z^s(1-z)^{\ell-s} p_{i|s} \hspace{1ex}+ \sum_{s=i-\textcolor{black}{h}(\textcolor{black}{\sigma})}^{\ell} \binom{\ell}{s}z^s(1-z)^{\ell-s} p_{\textcolor{black}{i|s}}
\\
&< \sum_{s=0}^{i-\textcolor{black}{h}(\textcolor{black}{\sigma})-1} p_{\textcolor{black}{i|s}} \hspace{16.8ex}+ \sum_{s=i-\textcolor{black}{h}(\textcolor{black}{\sigma})}^{\ell} \binom{\ell}{s}z^s(1-z)^{\ell-s}.
\end{split}
\end{align}
We apply the upper bound in~(\ref{eq:randomkernel_caseI_erasure_probability_upperbound_step3}) to the first summation, and obtain that
\begin{align}\label{eq:randomkernel_caseI_erasure_probability_upperbound_step3_b}
\sum_{s=0}^{i-\textcolor{black}{h}(\textcolor{black}{\sigma})-1} p_{\textcolor{black}{i|s}} \leq \sum_{s=0}^{i-\textcolor{black}{h}(\textcolor{black}{\sigma})-1} 2\left(\frac{2}{3}\right)^{i-s-1} \leq \sum_{s=\textcolor{black}{h}(\textcolor{black}{\sigma})}^{\infty} 2\left(\frac{2}{3}\right)^s = 
 6\left(\frac{2}{3}\right)^{\textcolor{black}{h}(\textcolor{black}{\sigma})} = \ell^{-\textcolor{black}{\sigma}}.
\end{align}
\textcolor{black}{Utilizing the assumption in~(\ref{eq:hpz2}),} the second summation is again upper bounded by applying Hoeffding's inequality on the tail probability of the binomial distribution $X\sim B(\ell,z)$ with $\ell$ trials and a success rate of $z$ as follows:
\begin{align}\label{eq:randomkernel_caseI_erasure_probability_upperbound_step3_c}
\sum_{s=i-\textcolor{black}{h}(\textcolor{black}{\sigma})}^{\ell} \binom{\ell}{s}z^s(1-z)^{\ell-s} = \mathbb{P} \{X\geq i - \textcolor{black}{h}(\textcolor{black}{\sigma})\} \leq \mathbb{P} \left\{X\geq z\ell + \left(\frac{\beta\ell\ln \ell}{2} \right)^{1/2} \right\} \leq \ell^{-\beta}. 
\end{align}
\end{proof}


\subsection{Proof of~\Pref{thm:cases}}\label{subsec:proof_thm3}
At this point, we have gathered all the required tools to prove~\Pref{thm:cases}. Our proof consists of two steps. First, we show that the polarization behavior of a random non-singular $\ell\times\ell$ kernel is given, with high probability, by the function $\cF_i(z)$ analyzed in the previous subsection. Then, we explain how to relate this fact to an upper bound on $\lambda_{\alpha,K}(z)$. \textcolor{black}{Note that throughout this section, all probabilities are defined with respect to the random selection of non-singular kernels and there is no randomness in $i$. In fact, the polarization behavior of a desired kernel should be similar to $\cF_i(z)$ for all $i$.}
As the theorem suggests, we split the proof into two parts: the first
part takes care of the middle interval and proves
\eqref{eq:cases_middle}, while the second part takes care of the tail
intervals and proves \eqref{eq:cases_tails}.

\begin{proof}[Proof of \eqref{eq:cases_middle}]
First, we combine the results of 
\Pref{lemma:randomkernel_caseI_erasure_probability_lowerbound}
and 
\Pref{lemma:randomkernel_caseI_erasure_probability_upperbound} 
to show that $\cF_i(z)$ roughly behaves as a step function. 
In the previous subsection, we have shown that
\begin{align}
\label{eq72}
\left\{
   \begin{array}{ll}
      \cF_i(z) >(1-\ell^{-\beta})(1-\ell^{-\textcolor{black}{\sigma}}),  &\hspace{7ex}\text{ if }  z>\frac{i}{\ell} + \frac{\ceil{\textcolor{black}{\sigma}\log \ell}}{\ell}+
               \bigg(\frac{\beta\ln \ell}{2\ell}\bigg)^{1/2} \\
      \cF_i(z) <\ell^{-\beta}+\ell^{-\textcolor{black}{\sigma}},  &\hspace{7ex}\text{ if }  z<\frac{i}{\ell} - \frac{\floor{\frac{\textcolor{black}{\sigma}\log \ell + \log 6}{\log 3 -1}}}                              {\ell}-\bigg(\frac{\beta\ln \ell}{2\ell}\bigg)^{1/2}
   \end{array}.
\right.
\end{align}
Our strategy is to show that, with high probability over the choice of
the kernel, $f_{K,i}(z)$ is sharp for each fixed value of $i$. Then,
we will use a \textcolor{black}{union-bound-like argument} to show that $f_{K,i}(z)$ is sharp for all
$i\in[\ell]$. To this end, we first set 
$\beta = \textcolor{black}{\sigma} = 4.5+\log\ell$ in \eq{eq72}. 
\textcolor{black}{Given that the nature of our results is asymptotic,} we assume that $\ell \geq 32$. Now, it is easy to derive the following from \eq{eq72}. 
\begin{align}\label{eq:randomkernel_caseI_erasure_probability_final_bounds}
\left\{
   \begin{array}{ll}
      \cF_i(z) >1-2\ell^{-4.5-\log \ell}>1-(2\ell^{4+\log \ell})^{-1},  &\hspace{7ex}\text{ if }  z\geq\frac{i}{\ell} + c_3\ell^{-1/2}\log \ell\\
      \\
      \cF_i(z) <2\ell^{-4.5-\log\ell}<(2\ell^{4+\log \ell})^{-1},  &\hspace{7ex}\text{ if }  z\leq\frac{i}{\ell} - c_4\ell^{-1/2}\log \ell
   \end{array},
\right.
\end{align}
where
\textcolor{black}{
\begin{align}\label{eq:randomkernel_caseI_erasure_probability_final_bounds_c_bound}
\begin{split}
c_3 &= \max_{\ell\geq 32} 
\frac
{\frac{\ceil{(4.5+\log \ell)\log \ell}}{\ell} + \big(\frac{(4.5+\log \ell)\ln \ell)}{2\ell}\big)^{1/2}}
{\ell^{-1/2}\log \ell}, 
\\[2ex]
c_4 &= \max_{\ell\geq 32} 
\frac
{\frac{\ceil{\frac{(4.5+\log\ell)\log\ell + \log 6}{\log 3 -1}}}{\ell} + \big(\frac{(4.5+\log \ell)\ln \ell)}{2\ell}\big)^{1/2}}
{\ell^{-1/2}\log \ell}.
\end{split}
\end{align}}
\textcolor{black}{Note that both $c_3$ and $c_4$ are finite numbers since the numerators in the RHSs of~(\ref{eq:randomkernel_caseI_erasure_probability_final_bounds_c_bound}) decay faster than their respective denominators as $\ell$ grows. It is also possible to remove the ceilings and show that both expressions are decreasing functions of $\ell$ if $\ell \geq 32$, which means that they attain their maximum at $\ell =32$. Thus, $c_3< 2.51$ and $c_4< 3.86$.
}

\textcolor{black}{Our goal is to prove the simultaneous concentration of $f_{K,i}(z)$'s around their means, $\mathcal{F}_i(z)$, with regards to where and how fast they transition from $f_{K,i}(z)\sim0$ to $f_{K,i}(z)\sim1$.} 
To do so, we first show that for any fixed value of $i$, $f_{K,i}(z)$ behaves similar to the average behavior, with high probability over the choice of $K$. Next, we provide a union-bound-like argument to prove that, with high probability over the choice of $K$, $f_{K,i}(z)$ is close to the average for all values of $i$. \textcolor{black}{For the first step, we recall the erasure probability of the $i$-th bit-channel from~(\ref{def:f_K,i(z)}) and expand it as}
\begin{align}
f_{K,i}(z) = \sum_{\text{all erasure patterns }{\bf e}\in\Ftwo^{\ell}} \one[\text{${\bf e}$ makes $i$-th bit-channel undecodable}] z^{wt({\bf e})}(1-z)^{\ell-wt({\bf e})}.
\end{align}
\textcolor{black}{Considering that each $z^{wt({\bf e})}(1-z)^{\ell-wt({\bf e})}$ term in the function above is a continuous and increasing function of $z$, we deduce that $f_{K,i}(z)$ is also a continuous and increasing function of $z$}. Therefore, to show the sharp transition of $f_{K,i}(z)$ around $z = \sfrac{i}{\ell}$, it suffices to consider only two points in $(0,1)$, one slightly larger than $z = \sfrac{i}{\ell}$ and one slightly smaller. Let us do so by defining \textcolor{black}{$c_5\triangleq \max\{c_3,c_4\}$} and 
\begin{align}
a_i\triangleq \frac{i}{\ell} + \textcolor{black}{c_5}\ell^{-1/2}\log \ell.
\end{align} From~(\ref{eq:randomkernel_caseI_erasure_probability_final_bounds}) \textcolor{black}{and~(\ref{eq:definition_randomkernel})}, we have that
\begin{align}
\mathbb{E}_{\textcolor{black}{K}}\big[1 - f_{K,i}(a_i)\big] = 1 - \cF_i(a_i) < (2\ell^{4+\log \ell})^{-1}.
\end{align}
From \textcolor{black}{Markov's} inequality, we deduce that
\begin{align}\label{eq:caseI_markov_inequality_upperbound}
\mathbb{P}\big\{f_{K,i}(a_i) \leq 1-\frac{1}{\ell^{2+\log \ell}} \big\} = \mathbb{P}\big\{1 - f_{K,i}(a_i) \geq \frac{1}{\ell^{2+\log \ell}} \big\} \leq \frac{\mathbb{E}_{K}\big[1 - f_{K,i}(a_i)\big]}{1/\ell^{2+\log \ell}} \leq \frac{1}{2\ell^2}\hspace{1ex}.
\end{align}
Define
\begin{align}\label{eq:caseI_upperbound_A_i}
\mathcal{A}_i \triangleq \big\{K\in \mathbb{F}_2^{\ell\times\ell} \big| K \text{ is nonsingular and } f_{K,i}(a_i) \geq 1 - \frac{1}{\ell^{2+\log \ell}}\big\}\hspace{1ex}. 
\end{align}
Therefore, \eqref{eq:caseI_markov_inequality_upperbound} can be re-written as
\begin{align}
\mathbb{P}\{ K \in \mathcal{A}_i \} \geq 1-\frac{1}{2\ell^2}\hspace{1ex}.
\end{align}
Similarly, set 
\begin{align}
b_i\triangleq \frac{i}{\ell} - \textcolor{black}{c_5}\ell^{-1/2}\log \ell,
\end{align}
 and define 
\begin{align}\label{eq:caseI_lowerbound_B_i}
\mathcal{B}_i \triangleq \big\{K\in \mathbb{F}_2^{\ell\times\ell} \big| K \text{ is nonsingular and } f_{K,i}(b_i) \textcolor{black}{\leq} \frac{1}{\ell^{2+\log \ell}}\big\}. 
\end{align}
A very similar use of \textcolor{black}{Markov's} inequality shows that
\begin{align}
\mathbb{P}\{ K \in \mathcal{B}_i \} \geq 1-\frac{1}{2\ell^2}\hspace{1ex}.
\end{align}
Then, define 
\begin{align}
\mathcal{D} = \big(\cap_{j=1}^{\ell} \mathcal{A}_j\big)\cap\big(\cap_{j=1}^{\ell}  \mathcal{B}_j\big).
\end{align}
 By union bound, we obtain that
\begin{align}\label{eq:caseI_union_bound}
\mathbb{P}&\{K \in \mathcal{D}\} 
\geq 1 - \sum_{i=1}^{\ell} \mathbb{P} \{K \notin \mathcal{A}_i\} - \sum_{i=1}^{\ell} \mathbb{P} \{K \notin \mathcal{B}_i\}
\geq 1 -  \frac{2\ell}{2\ell^2} = 1-\frac{1}{\ell}\hspace{1ex}.
\end{align}

We assume that $K\in \mathcal{D}$ throughout the remainder of
proof. This implies that, for $i\in [\ell]$,
\begin{align}\label{eq:caseI_intersection_kernel_step1}
\left\{
   \begin{array}{ll}
      f_{K,i}(z) \textcolor{black}{\geq} 1-\frac{1}{\ell^{2+\log \ell}},  &\hspace{7ex}\text{ for }  z = \frac{i}{\ell} + \textcolor{black}{c_5}\ell^{-1/2}\log \ell\\
      f_{K,i}(z) \textcolor{black}{\leq}\frac{1}{\ell^{2+\log \ell}},  &\hspace{7ex}\text{ for }  z = \frac{i}{\ell} - \textcolor{black}{c_5}\ell^{-1/2}\log \ell
   \end{array}.
\right.
\end{align}
As $f_{K,i}(z)$ is an increasing function of $z$, (\ref{eq:caseI_intersection_kernel_step1}) is equivalent to
\begin{equation}\label{eq:caseI_intersection_kernel_step2}
\left\{
   \begin{array}{ll}
      f_{K,i}(z) \textcolor{black}{\geq}1-\frac{1}{\ell^{2+\log \ell}},  &\hspace{7ex}\text{ for }  z\geq \frac{i}{\ell} + \textcolor{black}{c_5}\ell^{-1/2}\log \ell\\
      f_{K,i}(z) \textcolor{black}{\leq}\frac{1}{\ell^{2+\log \ell}},  &\hspace{7ex}\text{ for }  z\leq \frac{i}{\ell} - \textcolor{black}{c_5}\ell^{-1/2}\log \ell
   \end{array}.
\right.
\end{equation}
Given these concentration results, we can proceed to the second step of the proof. Let us define
\begin{equation}
\begin{split}
T_0(z,\ell) &\triangleq z\ell - \textcolor{black}{c_5}\ell^{1/2}\log \ell, \\
 T_1(z,\ell)& \triangleq z\ell + \textcolor{black}{c_5}\ell^{1/2}\log \ell.
\end{split}
\end{equation}
Note that
\begin{align}\label{eq:caseI_intersection_kernel_step3_number_of_i_s}
\frac{i}{\ell} - \textcolor{black}{c_5}\ell^{-1/2}\log \ell < z <\frac{i}{\ell} + \textcolor{black}{c_5}\ell^{-1/2}\log \ell
\Longleftrightarrow z\ell - \textcolor{black}{c_5}\ell^{1/2}\log \ell < i < z\ell + \textcolor{black}{c_5}\ell^{1/2}\log \ell.
\end{align}
Therefore, for any $z\in \textcolor{black}{(0,1)}$, the number of indices $i$ such that $f_{K,i}(z)$ does not satisfy ~(\ref{eq:caseI_intersection_kernel_step2}) is upper bounded by 
\begin{align}
T_1(z,\ell) - T_0(z,\ell) = 2\textcolor{black}{c_5}\ell^{\textcolor{black}{1/2}}\log \ell.
\end{align}

We can re-write $\lambda_{\alpha,K}(z)$ which was defined earlier in~(\ref{eq:lambda_definition}) as 
\begin{align}\label{eq:caseI_intersection_kernel_step5}
\lambda_{\alpha,K}(z) 
= \frac{\displaystyle\frac{1}{\ell}\sum_{i\in(T_0(z,\ell),T_1(z,\ell))}\hspace{1ex}g_{\alpha}(f_{K,i}(z))}{g_{\alpha}(z)}  + 
\frac{\displaystyle\frac{1}{\ell}\sum_{i\notin(T_0(z,\ell),T_1(z,\ell))}\hspace{1ex}g_{\alpha}(f_{K,i}(z))}{g_{\alpha}(z)} .
\end{align}
By using~(\ref{eq:caseI_intersection_kernel_step2}), we have that, for any $i\notin \big(T_0(z,\ell),T_1(z,\ell)\big)$,
\begin{align}\label{eq:caseI_intersection_kernel_step6}
g_{\alpha}(f_{K,i}(z)) \leq 
g_{\alpha}\left(\frac{1}{\ell^{2+\log \ell}}\right) 
< \big(\ell^{-2-\log \ell}\big)^{\alpha}.
\end{align}
By combining~(\ref{eq:caseI_intersection_kernel_step6}) with the trivial upper bound of $g_{\alpha}(f_{K,i}(z)) \leq 1$ for the left summation, and upper bounding the number of indices not in $(T_0,T_1)$ by $\ell$ \textcolor{black}{for the other summation}, we obtain that
\begin{align}\label{eq:caseI_intersection_kernel_step7}
\lambda_{\alpha,K}(z) \leq \frac{\frac{1}{\ell}2\textcolor{black}{c_5}\ell^{1/2}\log \ell  }{g_{\alpha}(z)} +
 \frac{\big(\ell^{-2-\log \ell}\big)^{\alpha} }{g_{\alpha}(z)}.
\end{align}
Furthermore, note that, for any $z\in (1/\ell^2,1-1/\ell^2)$, 
\begin{align}\label{eq:caseI_intersection_kernel_step8}
g_{\alpha}(z) \geq \big(\ell^{-2}(1-\ell^{-2})\big)^{\alpha}.
\end{align}
By combining~(\ref{eq:caseI_intersection_kernel_step7}) and~(\ref{eq:caseI_intersection_kernel_step8}), we have that
\begin{align}\label{eq:caseI_intersection_kernel_step9}
\lambda_{\alpha,K}(z) \leq \frac{1}{(1-\ell^{-2})^{\alpha}}\bigg( 2\textcolor{black}{c_5}\ell^{-1/2+2\alpha}\log \ell   +\ell^{-\alpha\log \ell} \bigg).
\end{align}
\textcolor{black}
{Given that $\alpha = 1/\log \ell$, we can simplify~(\ref{eq:caseI_intersection_kernel_step7}) according to the following.
\begin{equation}\label{eq:caseI_intersection_kernel_step10a}
\ell^{2\alpha} = (\ell^{\alpha})^2 = 4, \text{ and }
\ell^{-\alpha\log \ell} = \ell^{-1}.
\end{equation}}
Also,  $(1-\ell^{-2})^{-\alpha}$ is \textcolor{black}{a decreasing} function with $\ell$ for $\ell\geq 2$ and \textcolor{black}{an increasing} function with $\alpha$ for $\alpha\leq 1$, which attains its maximum at $(\alpha,\ell) = (1,2)$. That is
\begin{equation}\label{eq:caseI_intersection_kernel_step10b}
\begin{split}
(1-\ell^{-2})^{-\alpha} &\leq 4/3<2,
\end{split}
\end{equation}
By applying the inequalities in~(\ref{eq:caseI_intersection_kernel_step10a}) and~(\ref{eq:caseI_intersection_kernel_step10b}) to~(\ref{eq:caseI_intersection_kernel_step9}), we finally obtain that
\begin{align}\label{eq:caseI_intersection_kernel_step11}
\lambda_{\alpha,K}(z) \leq 4\textcolor{black}{c_5}\ell^{-1/2+2\alpha}\log \ell   +2\ell^{-\alpha\log \ell} 
\leq \textcolor{black}{16 c_5\ell^{-1/2}\log\ell +  \hspace{0.5ex}\ell^{-1}
= O(\ell^{-1/2}\log\ell),}
\end{align}
\textcolor{black}{which establishes the existence of the universal constant $c$ in~(\ref{eq:cases_middle}) and} concludes the proof.

\end{proof}

\begin{proof}[Proof of \eqref{eq:cases_tails}]
The proof of the tail intervals also follows from analyzing the average erasure probabilities. We present the proof mainly for the lower tail, where $z\in(0,1/\ell^2)$. Similar arguments yield the proof for the upper tail.

\textcolor{black}{Let $K\in GL(\ell,\Ftwo)$ denote an $\ell\times\ell$ random non-singular kernel. Define an indicator random variable $X_{i,s}$ as
\begin{align}
X_{i,s} = X_{i,s}(K) = \one(\mathcal{R}_{\ell-s} \cap (E_i\setminus E_{i-1}) = \emptyset),
\end{align}
where $\mathcal{R}_{\ell-s}$ is the linear span of the first $\ell-s$ columns in $K$ and $1\leq i \leq \ell$, and $0\leq s\leq \ell$. Given that $K$ is non-singular, $\mathcal{R}_{\ell-s}$ represents a random subspace of dimension $\ell-s$ in $\mathbb{F}_2^{\ell-s}$. By recalling the inequality in~(\ref{eq:randomkernel_caseI_erasure_probability_upperbound_step3}), we have
\begin{align}
\mathbb{E}[X_{i,s}] = p_{i|s} \leq  3 \left( \frac{2}{3}\right)^{i-s}.
\end{align}
To establish a concentration result for $X_{i,s}$, we use Markov's inequality to get
\begin{align}\label{eq:caseII_Markovs_Inequality}
\mathbb{P} \big( X_{i,s}\geq (1/3)\ell^2(\log \ell)(2/3)^{i-s}\big)\leq \frac{9}{\ell^2(\log \ell)}\hspace{1ex}.
\end{align}
Next, we apply the union bound over all values of $i$ and $s$ to get
\begin{align}\label{eq:caseII_union_bound}
\begin{split}
\mathbb{P} \big( X_{i,s}< (1/3)\ell^2(\log \ell)(2/3)^{i-s}, \hspace{1ex} \forall i,s\big) 
&= 1 - \mathbb{P} \big(\cup_{i,s} X_{i,s}\geq (1/3)\ell^2(\log \ell)(2/3)^{i-s}\big)\\
&\geq 1 - \ell(\ell+1).\frac{9}{\ell^2(\log \ell)} = 1-o(1).
\end{split}
\end{align}
Let us define the random variable $f_{K,i}(z)$ as
\begin{align}
f_{K,i}(z) = \sum_{s=1}^{\ell} X_{i,s}(K) \binom{\ell}{s}z^{s}(1-z)^{\ell-s}.
\end{align}
We can invoke the inequality in~(\ref{eq:caseII_union_bound}) to deduce that with probability at least $1-o(1)$ over the choice of $K$, we have
\begin{align}\label{eq:caseII_upperbound_f_i_z_step1}
\begin{split}
f_{K,i}(z) 
&\leq \sum_{s=1}^{\ell} (1/3)\ell^2(\log \ell)\bigg(\frac{2}{3}\bigg)^{i-s} \binom{\ell}{s}z^{s}(1-z)^{\ell-s}\\
&= (1/3)\ell^2(\log \ell)\bigg(\frac{2}{3}\bigg)^{i} \sum_{s=1}^{\ell} \binom{\ell}{s} \bigg(\frac{3z}{2}\bigg)^{s}(1-z)^{\ell-s}.
\end{split}
\end{align}  
}
Note that 
\begin{align}\label{eq:caseII_upperbound_f_i_z_step2}
\begin{split}
\sum_{s=1}^{\ell} \binom{\ell}{s} \bigg(\frac{3z}{2}\bigg)^{s}(1-z)^{\ell-s}& = \left(1+\frac{z}{2}\right)^{\ell} - (1-z)^\ell\\
&<(1+z)^{\ell} - (1-z)^{\ell} \\
&\leq 2\ell z(1+z)^{\ell-1},
\end{split}
\end{align}
where the last inequality in~(\ref{eq:caseII_upperbound_f_i_z_step2}) comes from the mean-value theorem for the function $f(x) = (1+x)^{\ell}$, which states that $f(x)-f(-x)= (x-(-x))f'(x_0)$ for some $x_0\in[-x,x]$. \textcolor{black}{Thus, for any $x\in(0,1)$, there exists $x_0$ in $(-x,x)$ such that $f(x)=2\ell x(1+x_0)^{\ell-1}\leq 2\ell x(1+x)^{\ell -1}$, since $(1+x)^{\ell-1}$ is increasing with $x$ for $x\geq 0$.}

\noindent Next, we point out that, for any $z< \ell^{-2}$ and any \textcolor{black}{$\ell\geq 32$}, we have 
\textcolor{black}{
\begin{align}\label{eq:caseII_upperbound_f_i_z_step4}
(1+z)^{\ell-1}\leq \bigg(1 + \frac{1}{\ell^2}\bigg)^{\ell-1}\leq \bigg(1 + \frac{1}{\ell^2}\bigg)^{\frac{\ell^2}{\ell-1}} < \exp(1/(\ell-1)) < 9/8. 
\end{align}
Now, we replace~(\ref{eq:caseII_upperbound_f_i_z_step2}) and~(\ref{eq:caseII_upperbound_f_i_z_step4}) in~(\ref{eq:caseII_upperbound_f_i_z_step1}) to deduce that with probability at least $1-o(1)$ over the choice of $K$, we have 
\begin{align}\label{eq:caseII_upperbound_f_i_z_step5}
f_{K,i}(z) < (3/4)\ell^3(\log \ell)\bigg(\frac{2}{3}\bigg)^i z.
\end{align}
For such kernels, we can use~(\ref{eq:caseII_upperbound_f_i_z_step5}) to derive the following upper bound on $\lambda_{\alpha,K}(z)$ for any $z\in(0,1/\ell^2)$:
\begin{align}\label{eq:caseII_upperbound_f_i_z_step6}
\begin{split}
\lambda_{\alpha,K}(z) 
&=\frac{\frac{1}{\ell}\sum_{i=1}^{\ell}g_{\alpha}(f_{K,i}(z))}{g_{\alpha}(z)}
=   \frac{1}{\ell} \sum_{i}^{\ell} \frac{\bigg(f_{K,i}(z)\big(1-f_{K,i}(z)\big) \bigg)^{\alpha} }{\big(z(1-z) \big)^{\alpha}}\hspace{5ex}\\
&< \frac{1}{\ell} \sum_{i=1}^{\ell} \bigg(f_{K,i}(z)\bigg)^{\alpha}z^{-\alpha}(1-z)^{-\alpha} 
< \ell^{3\alpha -1}(\log \ell)^{\alpha} \big(\frac{3/4}{1-z}\big)^{\alpha}\bigg(\sum_{i=1}^{\ell} \left(\frac{2}{3}\right)^{i\alpha} \bigg).\hspace{5ex}
\end{split}
\end{align}
Given that $\alpha > 0$ and $z<1/\ell^2< 1/4$, we have
\begin{equation}\label{eq:caseII_upperbound_f_i_z_step65}
\left(\frac{3/4}{1-z}\right)^{\alpha} < 1.
\end{equation}}
Furthermore,
\begin{align}\label{eq:caseII_upperbound_f_i_z_step7}
\begin{split}
\sum_{i=1}^{\ell} \left(\frac{2}{3}\right)^{i\alpha} 
< \sum_{i=1}^{\infty} \left(\frac{2}{3}\right)^{i\alpha} 
\textcolor{black}{\overset{(a)}{=} \frac{x}{1-x}\bigg|_{x=(\sfrac{2}{3})^{\alpha}}
= \frac{1}{x-1}\bigg|_{x=(\sfrac{3}{2})^{\alpha}}
\overset{(b)}{\leq} \frac{1}{\ln (3/2)^{\alpha}}
= \frac{1}{\alpha\ln(3/2)},}
\end{split}
\end{align}
where (a) comes from the power series expansion, and (b) is because of $\ln(x) \leq x-1$ for all $x>0$.
Moreover, 
\textcolor{black}{given that $\alpha = 1/\log \ell$,}
we obtain that
\begin{equation}\label{eq:caseII_upperbound_f_i_z_step8new}
\textcolor{black}{
\ell^{3\alpha} = 8 \text{ and } (\log \ell)^{\alpha} = (\log \ell)^{1/\log \ell} < 2.} 
\end{equation}
By combining \eqref{eq:caseII_upperbound_f_i_z_step6}, \eqref{eq:caseII_upperbound_f_i_z_step65}, \eqref{eq:caseII_upperbound_f_i_z_step7}, and \eqref{eq:caseII_upperbound_f_i_z_step8new}, 
we conclude that, for any $z\in(0,1/\ell^2)$, 
\begin{align}\label{eq:caseII_upperbound_f_i_z_step8}
\lambda_{\alpha,K}(z) < 
\textcolor{black}{
\frac{16}{\ln(3/2)}\ell^{-1}\log\ell = O(\ell^{-1}\log\ell),
}
\end{align}
which yields the desired bound on the lower tail. 
By following steps similar to~(\ref{eq:caseII_Markovs_Inequality})-(\ref{eq:caseII_upperbound_f_i_z_step8}), 
we can also show that, for any $z\in (1-1/\ell^2,1)$,  
\begin{align}\label{eq:caseII_upperbound_h_i_z_step4}
\lambda_{\alpha,K}(z) \leq c\ell^{-1}\log \ell,
\end{align}
\textcolor{black}{for some universal constant $c$ with probability at least $1-o(1)$} over the choice of the kernel. By combining \eqref{eq:caseII_upperbound_f_i_z_step8} and \eqref{eq:caseII_upperbound_h_i_z_step4} and using one last union bound, \textcolor{black}{we conclude that as $\ell$ grows, we have}
\begin{align}\label{eq:caseII_upperbound_h_i_z_step5}
\mathbb{P} \bigg\{\lambda_{\alpha,K}(z)< \textcolor{black}{c\ell^{-1}\log\ell}, \hspace{0.5em} \forall z\in\Big(0,\frac{1}{\ell^2}\Big)\cup\Big(1-\frac{1}{\ell^2},1\Big) \bigg\} > 1-\textcolor{black}{o(1)}.
\end{align}
\end{proof}

\section{Discussion and Open Problems} 
\label{sec:disc}

\textcolor{black}{This paper concerns the case of transmission over the binary erasure channel (BEC). One natural question is whether our results can be extended to the transmission over any binary memoryless symmetric channel (BMSC). After a preliminary version of this manuscript has appeared in \cite{pre1, pre2}, the question above has been resolved in \cite{Guru-Ria}. In the rest of this section,} we go over some unsolved challenges in the context of large-kernel polar codes. Most of these problems are initiated by the requirements on the size of the kernel. It was already mentioned that $\ell$ scales exponentially with the inverse of gap to the optimal scaling exponent, $\mu = 2$. This forces $\ell$ to be extremely large even for moderately good scaling exponents. In the following we address multiple issues with large $\ell$s:

\noindent$\hspace{1ex}\bullet\hspace{1ex}$\emph{Computation of the scaling exponent.} The computation of the scaling exponent even for the binary erasure channel is NP-hard~\cite{F14}. While there are methods to improve the efficiency of these calculations for small values of $\ell$, we are not aware of any algorithm that can do it for arbitrary $64\times 64$ kernels. 

\noindent$\hspace{1ex}\bullet\hspace{1ex}$\emph{Explicit construction of fast polarizing kernels.} In this paper, we showed that, given sufficiently large $\ell$, almost all binary non-singular $\ell\times\ell$ matrices are suitable polarization kernel candidates. However, the problem of finding one, or a family, of such kernels remains unsolved. Note that exhaustive search \textcolor{black}{only works up} to $\ell\sim8$, while there are a few heuristic construction algorithms for $\ell = 16,32,$ and $64$.    

\noindent$\hspace{1ex}\bullet\hspace{1ex}$\emph{Construction of polar codes.} In the polar coding terminology, the construction problem refers to the problem of \textcolor{black}{finding the best} bit-channels for which we transmit information over. There are multiple known algorithms for \textcolor{black}{classical} polar codes including the \textcolor{black}{Tal-Vardy} method in~\cite{TVbounds} \textcolor{black}{and the Gaussian Approximation in~\cite{Trif12}}. Unfortunately, there is yet another computation complexity blow-up if one replaces the $2\times 2$ kernels with arbitrarily large $\ell\times\ell$ matrices, which leaves us with the Monte-Carlo method for finding less noisy bit-channels. \textcolor{black}{However, this method is known to perform poorly in the precision/complexity trade-off.}  

\noindent$\hspace{1ex}\bullet\hspace{1ex}$\emph{Decoding complexity.} The recursive implementation of successive-cancellation decoding for polar codes is based on the butterfly-like graph, where each node represents a polarization kernel. These kernel-nodes perform successive-cancellation decoding of the kernel itself, and then communicate with each other  on a specific schedule to reveal the uncoded information bits sequentially and efficiently.
It is well known that the overall decoding complexity for 
conventional polar codes is $O(n\log n)$. However, the internal successive-cancellation computations within 
the kernels become more complicated when the $2\times2$ conventional kernel is replaced with an $\ell \times \ell$ kernel. This effectively changes the asymptotic decoding complexity to $O(2^{\ell} n \log n)$. This is probably the most controversial problem with using polar codes constructed from large kernels if the underlying channel \textcolor{black}{is not a BEC}. 
\textcolor{black}{Moreover, in the case of BEC, decoding can be accomplished by using Gaussian Elimination.}
This raises the main question about practicality of polar codes with large kernels. Recently, there have been multiple attempts at finding/constructing fast-polarizing kernels with enough structure that would allow us to design a decoding algorithm with reasonable decoding complexity. See for example~\cite{BFSTV17, TT19}. However, a general approach to reducing the decoding complexity of large kernels is still lacking from the literature.

Despite all these problems, we view the fact that polar codes constructed from random large kernels perform nearly as good as the random codes to be of theoretical interest. \textcolor{black}{That is, we have shown that polarization kernels with optimal scaling for the BEC exist. The problem of finding practical such kernels is a topic of further research.}


\bibliographystyle{acm}

\end{document}